\documentclass[11pt, reprint, notitlepage,tightenlines,nofootinbib,superscriptaddress,aps,prl]{revtex4-2}
\usepackage[cal=boondox]{mathalfa}        
\usepackage{amsmath,amssymb,amsthm,microtype,mathtools}
\usepackage{enumitem}
\usepackage{graphicx}
\usepackage{float}
\usepackage[section]{placeins}
\usepackage{xcolor}
\usepackage{tikz}
\usepackage{pgfplots}
\usetikzlibrary{arrows.meta,calc,decorations.pathreplacing,positioning,fit,backgrounds,shapes.geometric,patterns,shadows,shapes.misc}
\usepackage{comment}
\usepackage{mathpazo}
\usepackage{braket}
\usepackage{booktabs}

\definecolor{mitred}{RGB}{163, 31, 52} 
\usepackage[colorlinks=true,linkcolor=mitred,citecolor=mitred,urlcolor=mitred]{hyperref}

\pgfplotsset{compat=1.18}
\pgfmathdeclarefunction{exactpt}{1}{%
  \pgfmathparse{0.5*(1 + floor(1/(#1))*(#1)^2 + (1-floor(1/(#1))*(#1))^2)}%
}

\newtheorem{theorem}{Theorem}
\newtheorem{lemma}[theorem]{Lemma}
\newtheorem{proposition}[theorem]{Proposition}

\DeclareMathOperator{\Id}{\mathbb{I}}

\newcommand{\abs}[1]{\left\lvert #1\right\rvert}

\newcommand{\tr}[1]{\textnormal{tr}\left[#1\right]}
\newcommand{\ketbra}[2]{\ket{#1}\!\!\bra{#2}}

\newcommand{\eps}{\varepsilon}

\newcommand{\BC}{\mathcal{B}}

\newcommand{\DC}{\mathcal{D}}
\newcommand{\EC}{\mathcal{E}}

\newcommand{\HC}{\mathcal{H}}

\newcommand{\PC}{\mathcal{P}}

\newcommand{\SC}{\mathcal{S}}

\newcommand{\UC}{\mathcal{U}}

\newcommand{\lsec}[1]{\textit{#1.---}}

\hypersetup{
  pdftitle={Tight Bounds for Purity and Product Testing from Partial Transposition},
  pdfauthor={Oren Akresh, Jacob Beckey},
  pdfsubject={Quantum property testing},
  pdfkeywords={product test, quantum property testing, multipartite entanglement, PPT, restricted measurements}
}

\begin{document}
\title{Tight Bounds for Purity and Product Testing from Partial Transposition}
\author{Oren Akresh}
\affiliation{Central High School, Champaign, IL 61820, USA.}

\author{Jacob Beckey}
\email{jbeckey@illinois.edu}
\affiliation{Department of Mathematics, University of Illinois at Urbana-Champaign, Urbana, IL 61801, USA.}
\affiliation{IQUIST, University of Illinois Urbana-Champaign, Urbana, Illinois 61801, USA.}
\affiliation{Department of Physics, University of Rhode Island, Kingston, RI 02881, USA}

\date{\today}

\begin{abstract}

Coherent measurements across multiple copies of an unknown quantum state can substantially reduce the number of samples required to learn its properties, but remain experimentally challenging. Current experiments typically prepare and measure one copy at a time, potentially adapting later measurements to earlier outcomes. A central challenge is adaptivity, which makes the space of possible measurement strategies difficult to characterize. The positive-partial-transpose (PPT) relaxation bypasses this complexity by considering a larger, mathematically tractable class of measurements, at the risk of weakening the resulting bounds. Here we show, surprisingly, that the relaxation loses nothing at the level of asymptotic sample complexity for two fundamental tasks: purity testing and product testing. In both cases, lower bounds against the full class of PPT measurements are matched by  nonadaptive single-copy protocols. Moreover, our proof requires only basic symmetric subspace identities, providing a simple route to sharp lower bounds for adaptive single-copy measurements.
\end{abstract}

\maketitle

\lsec{Introduction} Coherent measurements across multiple copies of an unknown quantum state can drastically reduce the total number of copies needed to learn properties of that state~\cite{odonnell2016Efficient,haah2017SampleOptimal,chen2022Exponential,huang2022Quantum,ye2025Exponential,noller2025infinite,bubeck2020Entanglement,chen2024optimal}. Yet even high-fidelity two-copy measurements remain at the experimental frontier~\cite{bluvstein2022quantum,bluvstein2023Logical,daguerre2025Experimental,miller2026Experimental}, while coherent measurements across many copies, as required by numerous optimal algorithms, are squarely out of reach. State characterization therefore typically proceeds one copy at a time, with later measurements potentially adapted to all previous outcomes. This ease of implementation comes at a steep cost in terms of the number of samples needed to characterize the unknown quantum state~\cite{chen2022Exponential,huang2022Quantum,ye2025Exponential,noller2025infinite}. 

A particularly stark separation between single- and multi-copy measurements arises for the ubiquitous task of purity estimation. Given coherent access to two identical copies of a $d$-dimensional quantum state, one can estimate purity using a SWAP test~\cite{barenco1997Stabilization,ekert2002Direct} with a number of repetitions independent of $d$. In contrast, even a potentially adaptive single-copy measurement protocol requires $\Omega(\sqrt{d})$ samples to estimate the purity to constant precision~\cite{chen2022Exponential,gong2024sample}. For a $k$-qubit state, $d=2^k$, so this represents an \textit{exponential separation} between the sample complexity of one- and two-copy protocols.

A sample complexity lower bound implies that \textit{there cannot exist} an algorithm using fewer samples and proving such results can be highly non-trivial, especially when one allows the learner to choose measurements adaptively based on all prior outcomes. Many recent tight sample complexity lower bounds have been proved in this setting using the so-called \textit{learning tree formalism}, inspired by tools from theoretical computer science~\cite{chen2022Exponential}. This technique involves the determination of probability distributions over measurement transcripts and uses results from classical statistics to prove hardness of distinguishing these resultant distributions~\cite{chen2022Exponential,chen2022When,arunachalam2026Optimal,gong2024sample,beckey2025Product,chen2024Optimala}. 

A complementary approach, inspired by the entanglement theory~\cite{peres1996Separability,horodecki1996Separability,rains2001semidefinite} and state discrimination~\cite{eggeling2002Hiding,matthews2009Distinguishability} literature, is taken in Refs.~\cite{harrow2023Approximate,hinsche2025SingleCopy}. Here the authors relax the set of allowable measurements to a mathematically tractable set classified by the fact that the measurement operators remain positive after partial transposition (PPT). Then, one uses the approximate orthogonality of permutation operators~\cite{harrow2023Approximate,hinsche2025SingleCopy} to prove the desired sample complexity lower bounds. Unfortunately, as we discuss below, this technique typically yields weaker sample lower bounds than the learning tree methods for the same tasks.

The PPT relaxation is mathematically tractable but can be loose, sometimes exhibiting strict separations from protocols based on local operations and classical communication (LOCC) in state discrimination and entanglement manipulation~\cite{beigi2010Approximating,bennett1999Quantum,cheng2023Discrimination,liu2023Complexity}. Despite enlarging the set of adaptive single-copy measurements, our lower bounds are asymptotically tight, being saturated by previously known nonadaptive single-copy measurements~\cite{chen2022Exponential,gong2024sample,beckey2025Product}. To our knowledge, these are the first quantum property testing problems for which relaxing adaptive single-copy measurements to PPT measurements preserves the optimal dimension dependence of the sample complexity.

Remarkably, our proof requires only basic facts from linear algebra and the church of the symmetric subspace~\cite{harrow2013Church}. Moreover, it does not require the dimension of the state $d$ to be larger than the number of copies $n$ (i.e. the stable regime of Schur-Weyl) as in~\cite{harrow2023Approximate}. Our partial-transposition approach thus avoids the dimensional restrictions of approximate orthogonality and complements the learning-tree formalism, providing an alternative route to sample-complexity lower bounds when tree-based analyses become technically difficult, using tools familiar across quantum physics.

\lsec{Property Testing the PPT Relaxation} In quantum property testing, one must determine, with success probability at least $2/3$, whether an unknown state $\rho$ satisfies a property $\PC\subseteq\BC(\mathbb{C}^d)$ or is $\eps$-far in trace distance from every state in $\PC$, given the promise that one of these holds~\cite{montanaro2016Survey}. To prove a sample-complexity lower bound, one typically constructs distributions $\EC_0$ and $\EC_1$ supported on states satisfying the respective promises. Averaging over the unknown state drawn from $\EC_b$ gives the $n$-copy state
\begin{align}
\omega_b^{(n)}
\coloneq
\underset{\rho\sim\EC_b}{\mathbb{E}}
\left[\rho^{\otimes n}\right],
\qquad b\in\{0,1\}.
\end{align}
A two-outcome measurement $\{M,\Id-M\}$ on $(\mathbb{C}^d)^{\otimes n}$ distinguishes these ensembles with bias
$\abs{\tr{M(\omega_0^{(n)}-\omega_1^{(n)})}}$. Any tester that succeeds on every promised state with probability at least $2/3$ must therefore achieve bias at least $1/3$ on these averaged states. Thus, an upper bound $B(n,d)$ on the bias over the relevant measurement class implies that constant-success testing requires $B(n,d)=\Omega(1)$. As we will see shortly, this allows one to infer a sample complexity lower bound from the upper bound on the distinguishing bias.

Treating each copy as a separate party, any adaptive single-copy protocol can be viewed as one-way LOCC measurement $\{M,\Id-M\}$ on $\HC^{\otimes n}$ across the partition $\HC_1:\cdots:\HC_n$. Following Harrow~\cite{harrow2023Approximate}, we bypass the complexity of adaptivity by relaxing through the strict inclusions~\cite{harrow_testing_2010,chitambar2014Everything}
\begin{align}
\mathrm{LOCC}\subset\mathrm{SEP\text{-}BOTH}
\subset\mathrm{PPT\text{-}BOTH}.
\end{align}
Here, BOTH requires both measurement effects to belong to the indicated class; in particular, PPT-BOTH is equivalent to
\begin{align}\label{eq:PPT-BOTH}
0\preceq M^{\Gamma_S}\preceq\Id
\qquad\forall,S\subseteq[n],
\end{align}
where $\Gamma_S$ denotes partial transposition of the registers in $S$. This relaxation is substantial: unlike separable measurements, PPT-BOTH permits effects that are bound entangled~\cite{horodecki1997Separability,bennett1999Unextendible,cheng2023Discrimination}. Harrow combines this relaxation with his results regarding the approximate orthogonality of permutation operators to show that any PPT-BOTH measurement used to distinguish $n$ copies of a Haar random pure state from $n$ copies of the maximally mixed state, will achieve a bias $b \leq O(n^2/\sqrt{d})$. The assumption that a
PPT-BOTH purity tester exists implies $1/3 \leq b$. This lower bound is violated unless one takes $n=\Omega(d^{1/4})$, yielding a sample complexity lower bounds that is quadratically weaker than the $\Omega(\sqrt{d})$ bound obtained by the learning tree formalism~\cite{chen2022Exponential,gong2024sample}.

This is not unique to purity testing. The PPT-BOTH relaxation combined with approximate orthogonality tools also yield weaker bounds for the two-party case of product testing, where the same approach gives only $\Omega(d^{1/8})$, again below the optimal $\Omega(\sqrt d)$ scaling~\cite{liu2025Separation,beckey2025Product}. We show that, at the level of asymptotic sample complexity, these gaps arise from the analysis rather than the relaxation: every PPT-BOTH tester for either task requires $\Omega(\sqrt d)$ samples. We begin with purity testing, which we emphasize is a weaker problem than the more experimentally motivated task of purity estimation, thus sample lower bounds on purity testers imply sample lower bounds on purity estimation algorithms.

\lsec{Purity Testing}
Let us first consider testing whether an unknown state is pure or $\eps$-far from all pure states. Let $\psi\coloneq\ketbra{\psi}{\psi}$ and define the $n$-fold Haar averaged state as
\begin{align}
\sigma^{(n)}
\coloneq
\underset{\ket{\psi}\sim\mathrm{Haar}}{\mathbb{E}}
\left[\psi^{\otimes n}\right]
=
\frac{\Pi_{\mathrm{sym}}^{(n)}}{\binom{d+n-1}{n}},
\end{align}
where $\Pi_{\mathrm{sym}}^{(n)}$ projects onto the symmetric subspace of $(\mathbb{C}^d)^{\otimes n}$. Distinguishing $n$ copies of a Haar random pure state from $n$ copies of the maximally mixed state $\tau_d \coloneq \Id/d$ reduces to purity testing because every pure state satisfies $d_{\mathrm{tr}}(\psi,\tau_d)
=
1-\frac{1}{d}$. Thus, any valid purity tester could distinguish these ensembles for any constant $\eps \in (0,1-1/d]$. We may now state our first result.

\begin{theorem}[PPT-BOTH purity testing]\label{thm:eps-ind-PPT-purity-LB}
For every $d,n\geq2$ and every PPT-BOTH measurement $\{M,\Id-M\}$,
\begin{align}
\abs{
\tr{
M\left(
\sigma^{(n)}-\tau_d^{\otimes n}
\right)
}
}
\leq
\frac{n(n-1)}{2d}.
\end{align}
Consequently, constant-bias purity testing under PPT-BOTH measurements requires $n=\Omega(\sqrt d)$ samples.
\end{theorem}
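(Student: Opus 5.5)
The plan is to expand the symmetric projector into permutation operators and bound each permutation's contribution separately, using the PPT-BOTH constraints. Write $D\coloneq\binom{d+n-1}{n}$. Schur--Weyl duality gives $\Pi_{\mathrm{sym}}^{(n)}=\frac{1}{n!}\sum_{\pi\in S_n}P_\pi$, where $P_\pi$ permutes the $n$ registers. Let $c(\pi)$ be the number of cycles of $\pi$, so that $\tr{P_\pi}=d^{c(\pi)}$, and normalize
\begin{align}
x_\pi\coloneq \frac{\tr{MP_\pi}}{d^{c(\pi)}}.
\end{align}
Then $x_e=\tr{M}/d^n=\tr{M\tau_d^{\otimes n}}$ and $n!D=\sum_\pi d^{c(\pi)}$. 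The bias is therefore
\begin{align}
\tr{M(\sigma^{(n)}-\tau_d^{\otimes n})}=\frac{1}{n!D}\sum_{\pi\neq e} d^{c(\pi)}\,(x_\pi-x_e).
\end{align}

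\textbf{Central lemma.} The key claim I would aim for is that $0\le x_\pi\le 1$ for every $\pi$ and every PPT-BOTH effect $M$. Since $x_e\in[0,1]$ as well, each difference then satisfies $\abs{x_\pi-x_e}\le 1$. This gives
\begin{align}
\abs{\tr{M(\sigma^{(n)}-\tau_d^{\otimes n})}}\le 1-\frac{d^n}{n!D}=1-\prod_{i=1}^{n-1}\Big(1+\frac{i}{d}\Big)^{-1}.
\end{align}
Using $1-\prod_i(1+a_i)^{-1}\le\sum_i a_i$ for $a_i\ge0$, which is an easy induction, the right-hand side is at most $\sum_{i=1}^{n-1} i/d=n(n-1)/(2d)$. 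As a sanity check, for $n=2$ this is exactly $1/(d+1)\le 1/d$.

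\textbf{Proving the lemma.} For a transposition $(ij)$ the argument is easy. One has $F_{ij}^{\Gamma_j}=d\,\Phi_{ij}\otimes\Id$, where $\Phi_{ij}$ is the maximally entangled projector on registers $i,j$. Hence
\begin{align}
\tr{MF_{ij}}=d\,\tr{M^{\Gamma_j}(\Phi_{ij}\otimes\Id)}.
\end{align}
Since $0\preceq M^{\Gamma_j}\preceq\Id$, this quantity lies in $[0,d\cdot d^{n-2}]=[0,d^{c(\pi)}]$, as required.

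For general $\pi$, the operator $P_\pi$ factorizes into commuting pieces acting on the disjoint supports of its cycles. For a single $k$-cycle I would look for a subset $S$ of its registers such that $P_{\pi}^{\Gamma_S}$ is $d^{a}$ times a projector of rank $d^{b}$ with $a+b=c(\pi)$. The natural choice is to transpose every other register around the cycle, which turns the cycle into a chain of maximally-entangled-type contractions. Failing an exact projector, it would suffice to show $P_\pi^{\Gamma_S}\succeq0$ with trace $d^{c(\pi)}$. Then $0\preceq M^{\Gamma_S}\preceq\Id$ immediately gives $0\le\tr{M^{\Gamma_S}P_\pi^{\Gamma_S}}\le\tr{P_\pi^{\Gamma_S}}=d^{c(\pi)}$.

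\textbf{Expected obstacle.} This general-$\pi$ step is where I expect the difficulty, since partial transposes of long cycles need not be positive. If it fails, my fallback is to work only with transpositions. First, WLOG symmetrize $M$ under conjugation by permutations; this preserves PPT-BOTH because the family of $\Gamma_S$ over all $S$ is permutation-invariant. Then use operator inequalities sandwiching $\Pi_{\mathrm{sym}}^{(n)}$ between expressions built from $\Id$ and the $F_{ij}$. An example is $\Pi_{\mathrm{sym}}^{(n)}\succeq\Id-\sum_{i<j}\frac{\Id-F_{ij}}{2}$, which can be checked irrep-by-irrep via content sums. This route would control the bias using only the transposition estimate above, with union-bound losses of order $\binom n2/d$.

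\textbf{Sample complexity.} The consequence then follows as in the preceding discussion. A constant-success tester forces a bias of at least $1/3$, so $n(n-1)/(2d)\ge1/3$, which gives $n=\Omega(\sqrt d)$.
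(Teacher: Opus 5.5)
\textbf{Gap: the central lemma is false.} Your claim that $0\le x_\pi\le 1$ for every $\pi$ is not just hard to prove for long cycles; it already fails for $3$-cycles. Take $n=3$ and $M=\frac{d-1}{2d}\Id+\frac{1}{2d}(P_{(123)}+P_{(132)})$.
\begin{itemize}
\item Since $P_{(123)}+P_{(132)}$ has eigenvalues $2$ and $-1$, $M$ has spectrum $\{\frac{d+1}{2d},\frac{d-2}{2d}\}\subset[0,1]$.
\item Write $P_{(123)}^{\Gamma_3}=VW^\dagger$ with $V\ket{a}=\ket{\phi^+}_{13}\ket{a}_2$, $W\ket{a}=\ket{a}_1\ket{\phi^+}_{23}$ and $\ket{\phi^+}=\sum_c\ket{cc}$. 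Then $V^\dagger V=W^\dagger W=d\,\Id$ and $W^\dagger V=\Id$. Hence $P_{(123)}^{\Gamma_3}+P_{(132)}^{\Gamma_3}=VW^\dagger+WV^\dagger$ has eigenvalues $1+d$ and $1-d$ (each $d$-fold) and $0$ elsewhere. So $M^{\Gamma_3}$ has spectrum $\{1,0,\frac{d-1}{2d}\}$.
\item $M$ commutes with every $P_\sigma$, and $M^{\Gamma_{S^c}}=(M^{\Gamma_S})^T$. So every $S$ is covered, and $M$ is PPT-BOTH. It stays PPT-BOTH after tensoring with $\Id^{\otimes(n-3)}$.
\end{itemize}
Yet $\tr{MP_{(123)}}=\frac{d-1}{2}+\frac{1+d^2}{2}=\frac{d(d+1)}{2}$, so $x_{(123)}=\frac{d+1}{2}>1$ for every $d\ge2$.

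This also explains why no good subset $S$ exists. For a $3$-cycle, every partial transpose is either $P_{\pi^{\pm1}}$ or a non-Hermitian teleportation-type operator like $VW^\dagger$, whose trace norm is $d^2$ rather than $d$. (Also, $\tr{MP_\pi}$ need not be real when $\pi\neq\pi^{-1}$.)

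The failure is structural. Since $\max_M x_\pi=\Theta(d)$ for $3$-cycles, any argument that bounds each permutation's contribution separately collects roughly $2\binom n3\cdot d^{-2}\cdot\frac{d}{2}=\Theta(n^3/d)$ from $3$-cycles alone, for $n\lesssim\sqrt d$. That yields at best $n=\Omega(d^{1/3})$. The true bound needs cancellations between permutations.

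\textbf{The fallback does not close the gap.} The inequality $\Pi_{\mathrm{sym}}^{(n)}\succeq\Id-\sum_{i<j}\frac{\Id-F_{ij}}{2}$ is true, but it compares $\Pi_{\mathrm{sym}}^{(n)}$ with $\Id$, whose trace is larger by a factor of about $n!$. Combined with $0\le\tr{MF_{ij}}\le d^{n-1}$, it gives an additive error of order $\binom n2\tr{M}$ in $\tr{M\Pi_{\mathrm{sym}}^{(n)}}$. After dividing by $D\approx d^n/n!$, this dwarfs $\tr{M}/d^n$. For $M=\Id$ the resulting lower bound is already negative when $n\ge3$.

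\textbf{The missing idea (the paper's route).} Compare consecutive levels rather than $\Pi_{\mathrm{sym}}^{(n)}$ with $\Id$:
\begin{itemize}
\item Telescope: $\sigma^{(n)}-\tau_d^{\otimes n}=\sum_{m=2}^n(\sigma^{(m)}-\sigma^{(m-1)}\otimes\tau_d)\otimes\tau_d^{\otimes(n-m)}$.
\item Use the coset decomposition $\Pi_m=\frac1m(\Pi_{m-1}\otimes\Id)(\Id+\sum_{i<m}F_{im})$.
\item Transpose only register $m$, where $F_{im}^{\Gamma_m}=d\,\psi_{i,m}$.
\end{itemize}
Because $A_m=\Pi_{m-1}\otimes\Id$ commutes with $\Psi_m=\sum_{i<m}\psi_{i,m}$, the transposed difference is a real combination of the two PSD operators $A_m$ and $A_m\Psi_m$, whose traces are explicit. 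This gives $\frac12\|(\sigma^{(m)}-\sigma^{(m-1)}\otimes\tau_d)^{\Gamma_m}\|_1\le\frac{m-1}{d+m-1}$. The bound $\tr{XY}\le\frac12\|Y\|_1$ for the effect $X=M^{\Gamma_m}$ then finishes.

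Here the PPT constraint is applied to the whole block $(\Pi_{m-1}\otimes\Id)F_{im}$, covering all permutations of a coset at once. That is exactly where the cancellations your per-permutation bound misses are captured. Your expansion of the bias in terms of $x_\pi$, the transposition estimate, the inequality $1-\prod_i(1+i/d)^{-1}\le n(n-1)/(2d)$, and the final $n=\Omega(\sqrt d)$ step are all correct. They just cannot be assembled into a proof without this step.
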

While the complete proof is provided in the End Matter, the core technique is sufficiently simple to sketch here. Rather than exploiting symmetry to restrict the form of $M$, as in Ref.~\cite{harrow2023Approximate}, we begin with a general telescoping decomposition that successively depolarizes the last register. Writing $\Delta\coloneq\sigma^{(n)}-\tau_d^{\otimes n}$, we have
\begin{align}
\Delta
=
\sum_{m=2}^n
\left(
\sigma^{(m)}
-
\sigma^{(m-1)}\otimes\tau_d
\right)
\otimes\tau_d^{\otimes(n-m)}.
\end{align}
We then use two elementary facts about inner products of operators on $(\mathbb{C}^d)^{\otimes n}$. First, $\tr{XY}= \tr{X^{\Gamma_S}Y^{\Gamma_S}}$ for all $S \subseteq [n]$. Second, if $X$ is an effect (i.e. $0 \preceq X \preceq \Id$) and $Y$ is traceless and Hermitian, then $\tr{XY} \leq \frac{1}{2}\|Y\|_1$. Thus, inserting our telescoping sum into the expression for the bias and recalling that the PPT both condition in Eq.~\eqref{eq:PPT-BOTH} ensures $M^{\Gamma_S}$ is an effect, we obtain
\begin{align}
    \abs{\tr{M \Delta}}  \leq \frac{1}{2}\sum_{m=2}^n \|(\sigma^{(m)})^{\Gamma_m} - \sigma^{(m-1)}\otimes \tau_d\|_1.
\end{align}
The whole problem then reduces to upper-bounding the trace norm of this operator that has its last tensor factor transposed. A bound with the required $O(m/d)$ scaling is a simple consequence of permutation symmetry.

To see this, recall that $\sigma^{(m)}$ is the normalized projector onto the symmetric subspace and hence a uniform average of permutation operators (cf. Ref~\cite{harrow2013Church}). These permutations can be organized according to what happens to the $m$-th site. For all permutations fixing this last site, the partial transposition acts trivially and the resultant difference of operators yields a correction whose trace norm is at most $O(m/d)$. The remaining permutations can be viewed as a permutation of the first $m-1$ terms followed by a SWAP between the $m$-th site and one of the other $m-1$ sites. Partially transposing the SWAP operator yields the well-known formula
\begin{align}
    P_d((im))^{\Gamma_m} = d \Phi_{i,m}.
\end{align}
where $\Phi_{i,m}$ is the projector onto the maximally entangled state between the $m$-th site and the $i$-th, where $i \in [m-1]$. This simple argument allows us to decompose the operator whose trace norm we wish to bound into the difference of two positive operators with traces equal to $(m-1)/(d+m-1)$. Applying the triangle inequality yields the desired bound
\begin{align}
    \abs{\tr{M \Delta}} \leq \sum_{m=2}^n \frac{m-1}{d} = \frac{n(n-1)}{2d}.
\end{align}

The same argument extends to nonconstant accuracy by considering the Haar orbit of
$\rho_\delta\coloneq(1-\delta)\ketbra{0}{0}+\delta\tau_d$, where $\delta=\Theta(\eps)$ is chosen so that its distance from the set of pure states, $\delta(1-1/d)$, equals $\eps$. The calculation is given in the Supplemental Material.

\begin{proposition}[$\eps$-dependent purity-testing lower bound]
\label{prop:eps-dep-purity-LB}
For $0<\eps\leq1-1/d$, any PPT-BOTH purity tester with distance parameter $\eps$ requires
$n=\Omega(\sqrt{d/\eps})$ samples.
\end{proposition}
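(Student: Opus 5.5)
We compare the Haar orbit of $\rho_\delta=(1-\delta)\ketbra{0}{0}+\delta\tau_d$ against the Haar orbit of a pure state. Write $\omega_0^{(n)}=\sigma^{(n)}$ and $\omega_1^{(n)}=\mathbb{E}_U[(U\rho_\delta U^\dagger)^{\otimes n}]$, with $\delta=\eps/(1-1/d)=\Theta(\eps)$. The goal is a bias bound of the form $O(n^2\eps/d)$, which forces $n=\Omega(\sqrt{d/\eps})$.

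\textbf{Step 1: expand $\omega_1^{(n)}$ in the depolarizing parameter.} Write $\rho_\delta^{\otimes n}=\sum_{T\subseteq[n]}(1-\delta)^{n-|T|}\delta^{|T|}\,\ketbra{0}{0}^{\otimes [n]\setminus T}\otimes\tau_d^{\otimes T}$ and average over $U$. This gives
\begin{align}
\omega_1^{(n)}=\sum_{T\subseteq[n]}(1-\delta)^{n-|T|}\delta^{|T|}\,\sigma^{([n]\setminus T)}\otimes\tau_d^{\otimes T},
\end{align}
where $\sigma^{(A)}$ denotes the normalized symmetric projector on the registers in $A$. The term $T=\emptyset$ reproduces $(1-\delta)^n\sigma^{(n)}$. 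Hence
\begin{align}
\sigma^{(n)}-\omega_1^{(n)}=\sum_{T\neq\emptyset}(1-\delta)^{n-|T|}\delta^{|T|}\bigl(\sigma^{(n)}-\sigma^{([n]\setminus T)}\otimes\tau_d^{\otimes T}\bigr),
\end{align}
and the weights sum to $1-(1-\delta)^n$.

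\textbf{Step 2: bound each bracket with the telescoping trick from Theorem~\ref{thm:eps-ind-PPT-purity-LB}.} The operator $\sigma^{(n)}-\sigma^{([n]\setminus T)}\otimes\tau_d^{\otimes T}$ is a partial depolarization. Telescope it by depolarizing the registers of $T$ one at a time. By permutation invariance of $\sigma$, each step removes one register from a symmetric block of current size $m$, with $m$ running from $n$ down to $n-|T|+1$.

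Each step has the form $(\sigma^{(m)}-\sigma^{(m-1)}\otimes\tau_d)\otimes(\text{state})$. Apply the two facts from the main text, namely that partial transposition preserves inner products and that effects pair with traceless Hermitian $Y$ by at most $\|Y\|_1/2$. Together with the $(m-1)/d$ estimate established in the proof of Theorem~\ref{thm:eps-ind-PPT-purity-LB}, each step contributes at most $(m-1)/d\le n/d$. The trace norm is multiplicative under tensoring with a state, so the extra factor costs nothing. Therefore
\begin{align}
\abs{\tr{M(\sigma^{(n)}-\omega_1^{(n)})}}\le\sum_{T}(1-\delta)^{n-|T|}\delta^{|T|}\frac{n|T|}{d}=\frac{n\cdot n\delta}{d},
\end{align}
using $\mathbb{E}|T|=n\delta$ under the binomial distribution.

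\textbf{Step 3: conclude.} A tester with success probability $2/3$ needs bias at least $1/3$. Step 2 then gives $n^2\delta/d\ge1/3$, so $n=\Omega(\sqrt{d/\eps})$. Two routine checks are needed:
\begin{itemize}[nosep]
\item $d_{\mathrm{tr}}(\rho_\delta,\text{pure states})=\delta(1-1/d)$, using that the closest pure state is $\ketbra{0}{0}$ by unitary symmetry and eigenvalue majorization.
\item $\delta\le1$ holds throughout the stated range $\eps\le1-1/d$.
\end{itemize}

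The main obstacle is Step 2's reuse of the single-step trace-norm estimate $\|(\sigma^{(m)})^{\Gamma_m}-\sigma^{(m-1)}\otimes\tau_d\|_1\le 2(m-1)/d$. We need it to hold when other registers are already depolarized, and with the transposition chosen so that $M^{\Gamma_S}$ remains an effect. This should be harmless, since PPT-BOTH gives every $\Gamma_S$ and the tensor factor $\tau_d^{\otimes}$ is transposition invariant. Still, I would verify carefully that the depolarized register can always be relabeled as the last one without breaking the effect property of the transposed $M$.
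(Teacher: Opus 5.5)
Your proposal is correct and follows the paper's proof essentially line by line: the same binomial expansion of the Haar-averaged $\rho_\delta^{\otimes n}$ over depolarized subsets, the same per-subset telescoping using the Lemma's bound $(m-1)/(d+m-1)\le (m-1)/d$, and the same averaging $\mathbb{E}|T|=n\delta$ to reach $O(n^2\eps/d)$. The paper keeps the slightly sharper $ns-s(s+1)/2$ before relaxing it to $ns$, and the relabeling concern you flag is harmless. You can transpose the register $t_j$ being depolarized directly, since PPT-BOTH makes $M^{\Gamma_{\{t_j\}}}$ an effect for every single register; relabeling is only needed inside the trace norm, which is invariant under conjugation by permutation operators.
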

Together with the universal Holevo--Helstrom lower bound $\Omega(1/\eps^2)$ for additive purity estimation, this matches the single-copy lower bound of Ref.~\cite[Theorem~6]{gong2024sample}, which uses the learning tree formalism and advanced martingale techniques (cf. Appendix A of Ref.~\cite{chen2024Optimala}).

\lsec{Product Testing} We now turn to the problem of product testing, for which our proof technique also enables a significant strengthening of the best PPT-BOTH result~\cite{harrow2023Approximate} and matches the tight lower bound proved via the tree formalism~\cite{liu2025Separation,beckey2025Product}. Compared to both of the known proofs, our results are again, substantially more elementary.

Suppose we are given a $k$-partite pure state $\ket{\psi} \in (\mathbb{C}^d)^{\tilde{\otimes} k}$, where we have introduced the notation $\tilde{\otimes}$ to distinguish the tensor structures \textit{within} a single copy of $\ket{\psi}$ and \textit{between} multiple copies. Product testing asks one to determine whether the given quantum state is of the form $\ket{\psi} = \ket{\psi_1} \otimes \dotsm \otimes \ket{\psi_k}$ or whether it is far from any such state.

Entangled states are often primitives for quantum information processing tasks, thus an enormous body of literature is dedicated to entanglement characterization~\cite{horodecki2009Quantum,guhne2009Entanglement}. There is excellent motivation, however, for studying the complementary problem of certifying that the given state is actually product. 

The product state ansätze underpin mean-field methods used throughout quantum physics and chemistry, motivating the certification of proximity to the set of product states~\cite{brandao2016ProductState,bakshi2024Learning}. The strongest motivation to study product testing comes from Harrow and Montanaro's seminal work~\cite{harrow_testing_2010} in which they utilize a two-copy measurement based on parallel SWAP tests~\cite{brennen2003observable,mintert2005Concurrence} to construct a product tester with constant sample complexity, which they used to prove the landmark equality $\mathrm{QMA}(k)=\mathrm{QMA}(2)$ for every $k\geq2$, where QMA is the quantum analogue of NP~\cite{jeronimo2026QMA2}. 

Beyond their experimental relevance, the limitations of single-copy product testers have striking complexity-theoretic consequences. As Harrow observes~\cite{harrow2023Approximate}, an efficient LOCC product tester would imply the unexpected collapse $\mathrm{QMA}=\mathrm{QMA}(2)$, resolving a central open problem in quantum complexity theory; with sufficient accuracy, such a tester could also contradict the Unique Games Conjecture~\cite{barak2012Hypercontractivity} or the Exponential Time Hypothesis~\cite{harrow_testing_2010}. Lower bounds on local product testers therefore bear directly on central questions in quantum and classical complexity theory.

In Appendix~D of Ref.~\cite{harrow_testing_2010}, the authors prove the non-existence of an efficient LOCC product test for the special case $n=2$ and $k=2$ using the PPT relaxation and a direct calculation based on the imposed constraints. Attempting to directly prove such a lower bound for $n>2$ quickly becomes intractable, and a forward citation was given to Ref.~\cite{harrow2023Approximate}, where the more advanced approximate orthogonality of permutation operators can be used to show that any PPT-BOTH product test for $k=2$ and $n$ arbitrary must use $n = \Omega(d^{1/8})$ to achieve constant bias. Our second main result significantly strengthens this result and, in light of Ref.~\cite{beckey2025Product}, is tight and saturated by a nonadaptive single-copy measurement.

\begin{theorem}\label{thm:product-testing-LB}
    If $\{M,\mathbb{I}-M\}$ is a PPT-BOTH product tester for $k=2$ acting on $n$ copies of state, then its bias is at most $O(n^2/d)$. Thus, a product tester using only PPT-BOTH measurements requires $n=\Omega(\sqrt{d})$ samples.
\end{theorem}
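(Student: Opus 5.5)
We follow the purity-testing argument of Theorem~\ref{thm:eps-ind-PPT-purity-LB}, now with a bipartite hard instance. Take $\EC_1$ to be Haar-random pure states $\ket{\psi}\in\mathbb{C}^d\tilde\otimes\mathbb{C}^d$, which are far from product with high probability. Take $\EC_0$ to be Haar-random products $\ket{\phi_1}\otimes\ket{\phi_2}$. The averaged states are then
\begin{align}
\omega_1^{(n)}=\frac{\Pi_{\mathrm{sym}}^{(n)}[d^2]}{\binom{d^2+n-1}{n}},\qquad
\omega_0^{(n)}=\sigma_A^{(n)}\tilde\otimes\sigma_B^{(n)},
\end{align}
where $\sigma_A^{(n)}$ and $\sigma_B^{(n)}$ are the normalized symmetric projectors on the $n$ copies of each $d$-dimensional subsystem.

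The plan is to insert the maximally mixed state $\tau_{d^2}^{\otimes n}=\tau_d^{\otimes n}\tilde\otimes\tau_d^{\otimes n}$ and use the triangle inequality:
\begin{align}
\abs{\tr{M(\omega_0^{(n)}-\omega_1^{(n)})}}\le \abs{\tr{M(\omega_1^{(n)}-\tau_{d^2}^{\otimes n})}}+\abs{\tr{M(\omega_0^{(n)}-\tau_{d^2}^{\otimes n})}}.
\end{align}
Each copy is a single party, so PPT-BOTH refers to partial transposes of whole copies, $d^2$-dimensional registers. The first term is therefore exactly Theorem~\ref{thm:eps-ind-PPT-purity-LB} in dimension $d^2$, giving at most $n(n-1)/(2d^2)$.

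For the second term, we telescope in two stages:
\begin{align}
\sigma_A^{(n)}\tilde\otimes\sigma_B^{(n)}-\tau_d^{\otimes n}\tilde\otimes\tau_d^{\otimes n}=(\sigma_A^{(n)}-\tau_d^{\otimes n})\tilde\otimes\sigma_B^{(n)}+\tau_d^{\otimes n}\tilde\otimes(\sigma_B^{(n)}-\tau_d^{\otimes n}).
\end{align}
Within each stage we reuse the last-register depolarizing telescope on the $A$ (resp.\ $B$) factor. A typical term is
\begin{align}
\bigl[(\sigma_A^{(m)}-\sigma_A^{(m-1)}\otimes\tau_d)\otimes\tau_d^{\otimes(n-m)}\bigr]\tilde\otimes\sigma_B^{(n)},
\end{align}
which is traceless and Hermitian. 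We transpose the whole $m$-th copy, i.e.\ both its $A$ and $B$ halves, which is allowed by Eq.~\eqref{eq:PPT-BOTH}. Then $M^{\Gamma_m}$ is an effect, and the term is bounded by half the trace norm of its $\Gamma_m$-transpose.

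The trace norm factorizes across $A$ and $B$. The $A$ factor gives $\|(\sigma_A^{(m)})^{\Gamma_m}-\sigma_A^{(m-1)}\otimes\tau_d\|_1\le 2(m-1)/d$, from the End Matter computation. The $B$ factor gives $\|(\sigma_B^{(n)})^{\Gamma_m}\|_1$. Summing over $m$ and the two stages gives $O(n^2/d)\cdot\max\|(\sigma^{(n)})^{\Gamma_m}\|_1$, provided that last factor is $O(1)$.

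The main obstacle is therefore bounding $\|(\sigma^{(n)})^{\Gamma_m}\|_1$. We decompose $\sigma^{(n)}$ as the uniform average of permutations, grouped by the image of site $m$. Permutations fixing $m$ contribute $\sigma^{(n-1)}\otimes\Id/(d+n-1)$-type terms, which are positive with trace $\frac{d}{d+n-1}$. The rest contribute $\frac{d}{d+n-1}\sum_i(\sigma^{(n-1)}\otimes\Id)$-conjugated $\Phi_{i,m}$ pieces, which are positive with total trace $\frac{n-1}{d+n-1}$. The bound should hold with constant at most $1+\frac{n-1}{d}$, and possibly exactly $1$ since the full expression stays positive.

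If this factorization is awkward, an alternative is to avoid it entirely. We telescope $\omega_0^{(n)}$ directly copy by copy, replacing copy $m$'s product average by $\tau_{d^2}$. Each increment is then $(\sigma_A^{(m)}\tilde\otimes\sigma_B^{(m)})^{\Gamma_m}$ minus its depolarized version, and is bounded by $O(m/d)$ using the SWAP-transpose identity $P((im))^{\Gamma_m}=d\Phi_{i,m}$ on each half separately. Either way the bias is $O(n^2/d)$, and requiring bias at least $1/3$ forces $n=\Omega(\sqrt d)$.
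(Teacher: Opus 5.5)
Your proposal is correct and is essentially the paper's proof. It uses the same Haar-product versus Haar-on-$\mathbb{C}^{d^2}$ instance, the triangle inequality through $\tau_{d^2}^{\otimes n}$ with Theorem~\ref{thm:eps-ind-PPT-purity-LB} applied at dimension $d^2$, transposition of the entire $m$-th copy, trace-norm factorization across $A$ and $B$, and Lemma~\ref{lem:last-leg-depolarize-trace-norm-bound}. Your fallback copy-by-copy telescope is exactly the paper's ordering, while your primary ordering (split $A$/$B$ first, then telescope) only changes the spectator factors to $\sigma_B^{(n)}$ and $\tau_d^{\otimes n}$.

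The spectator bound you flag as the main obstacle is exactly $1$, as your own traces $\frac{d}{d+n-1}+\frac{n-1}{d+n-1}$ already indicate. By permutation invariance, $(\sigma^{(n)})^{\Gamma_m}$ is unitarily equivalent to $(\sigma^{(n)})^{\Gamma_n}=\frac{1}{nD_n}A_n(\Id+d\Psi_n)\succeq 0$. This is the same positivity fact the paper uses for $(\sigma_B^{(m)})^{\Gamma_m}$.
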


The proof of this result is closely related to the proof above, but with a different distinguishing task. Consider trying to distinguish $n$ copies of a Haar random pure state on $\mathbb{C}^d \tilde{\otimes} \mathbb{C}^d$ and a Haar random product state on this same space. Given $n$ identical copies, our task is to distinguish the density matrices 
\begin{align} \label{eq:product-test-states}
    \underset{\ket{\psi_A},\ket{\psi_B} \in \mathbb{C}^d}{\mathbb{E}} [\psi_A^{\otimes n} \otimes \psi_B^{\otimes n}] \quad \text{from} \quad \underset{\ket{\psi} \in \mathbb{C}^{d^2}}{\mathbb{E}} [\psi^{\otimes n}],
\end{align}
which we denote $\rho^{(n)}_{\rm prod}$ and $\rho^{(n)}_{\rm far}$, respectively. For a bipartite pure state, the maximum squared overlap with a product state equals its largest squared Schmidt coefficient $\lambda_{\rm max}(\psi)$. Thus, the distance to the set of product states is $\sqrt{1-\lambda_{\rm max}(\psi)}$. A Haar random state on $\mathbb{C}^d \tilde{\otimes} \mathbb{C}^d$ satisfies $\lambda_{\rm max}(\psi)=O(1/d)$ and is thus far from all product states with extremely high probability~\cite{hayden2006Aspects,beckey2025Product}. Therefore, a product tester could distinguish these states with high probability. Thus, proving that it is hard to distinguish the states in Eq.~\eqref{eq:product-test-states} using PPT-BOTH measurements will imply the desired sample complexity lower bound on PPT-BOTH product testers.

As mentioned above, the product testing case has a tensor structure within copies and between copies. We regard each copy on $\HC_{A_i} \tilde{\otimes} \HC_{B_i} \cong \mathbb{C}^d \tilde{\otimes} \mathbb{C}^d$ as one sample register. In this setting, $\Gamma_S$ denotes the partial transpose of both $A_i$ and $B_i$ for every $i \in S \subseteq [n].$ Given this slight modification, we can say that a two-outcome POVM is PPT-BOTH if $0 \preceq M^{\Gamma_S} \preceq \Id$ for every $S \subseteq [n].$ This class contains all potentially adaptive single-copy protocols that measure one bipartite sample at a time.

With these modifications in place, the proof sketch is very simple. We express the difference of states as $\rho^{(n)}_{\rm prod} - \tau_{d^2}^{\otimes n} - (\rho^{(n)}_{\rm far} - \tau_{d^2}^{\otimes n})$ and then use triangle inequality to bound upper bound the bias as 
\begin{align}
 \abs{\tr{M(\rho^{(n)}_{\rm prod} - \tau_{d^2}^{\otimes n})}} + \abs{\tr{M(\rho^{(n)}_{\rm far} - \tau_{d^2}^{\otimes n})}}. 
\end{align}
The second term is identical to the bound proved in Theorem~\ref{thm:eps-ind-PPT-purity-LB} under the replacement $d \rightarrow d^2$, yielding an upper bound on that term of the form $O(n^2/d^2)$. The first term is handled by generalizing our telescoping technique to this bipartite tensor structure. This yields the stronger bound $O(n^2/d)$, allowing us to conclude
\begin{align}
    \abs{\tr{M(\rho^{(n)}_{\rm prod} - \rho^{(n)}_{\rm far})}} \leq O\left(\frac{n^2}{d}\right), 
\end{align}
as desired. There exists product testing algorithm using only nonadaptive single-copy local measurements~\cite{beckey2025Product}. When $k=2$ and $\eps$ is constant, the sample complexity of their algorithm is $O(\sqrt{d})$, implying worst case optimality of that algorithm and strengthening the $\Omega(\sqrt{d})$ lower bound proved via the tree formalism and a more involved technical analysis~\cite{liu2025Separation,beckey2025Product}.

\lsec{Conclusion and outlook}
In this work, we introduced an elementary method for proving sample complexity lower bounds against adaptive single-copy protocols by relaxing to the larger class of PPT-BOTH measurements. Although this class extends strictly beyond separable measurements, the relaxation loses nothing asymptotically for purity testing and two-party product testing: in both cases, our lower bounds for all PPT-BOTH measurements are matched by nonadaptive single-copy protocols. These results establish PPT as a sharp tool for studying experimentally accessible quantum measurements.

Our proof uses only elementary linear algebra and exact identities for the symmetric subspace. Unlike the approximate-orthogonality approach~\cite{harrow2023Approximate}, which requires the local dimension to be large compared with the number of copies, our bounds hold for all $d,n\geq2$. The method therefore offers a complementary route to lower bounds when learning-tree analyses are technically difficult. Forthcoming work applies the same perspective to obtain stronger multipartite quantum-data-hiding bounds~\cite{akresh2026Optimal}.

A natural question arising in our work is whether our techniques can be applied to single-copy stabilizer testing. After encountering obstacles to a learning tree proof, Hinsche and Helsen adapted Harrow's approximate-orthogonality strategy to the Clifford commutant, obtaining an $\Omega(\sqrt{n})$ lower bound~\cite{hinsche2025SingleCopy}, falling short of the optimal $\Omega(n)$ bound that was recently established as a corollary of a more general result proved via the learning trees~\cite{arunachalam2026Optimal}. It would be interesting to determine whether our techniques yield a simpler proof of this single-copy lower bound. This open question points to a broader possibility: partial-transposition methods may offer a widely applicable route to sharp single-copy lower bounds in quantum testing and learning.

\lsec{Acknowledgments} Chat GPT 5.6 Sol was used as a research collaborator to iterate and refine the results. The authors take full responsibility for the correctness, exposition, and attribution in the final manuscript. The authors thank Louis Schatzki, Luke Coffman, Graeme Smith, Eric Chitambar, Felix Leditzky, and Fernando Geronimo for helpful discussions and input on the results. J.L.B is supported by a National Science Foundation Mathematical Sciences Postdoctoral Research Fellowship under Award No.~2402287 and an IQUIST Postdoctoral Fellowship.

\bibliography{main.bib}

@inproceedings{odonnell2016Efficient,
  title = {Efficient Quantum Tomography},
  booktitle = {Proceedings of the Forty-Eighth Annual {{ACM}} Symposium on {{Theory}} of {{Computing}}},
  author = {O'Donnell, Ryan and Wright, John},
  year = {2016},
  month = jun,
  series = {{{STOC}} '16},
  pages = {899--912},
  publisher = {Association for Computing Machinery},
  address = {New York, NY, USA},
  doi = {10.1145/2897518.2897544},
  isbn = {978-1-4503-4132-5}
}

@article{haah2017SampleOptimal,
  title = {Sample-{{Optimal Tomography}} of {{Quantum States}}},
  author = {Haah, Jeongwan and Harrow, Aram W. and Ji, Zhengfeng and Wu, Xiaodi and Yu, Nengkun},
  year = {2017},
  month = sep,
  journal = {IEEE Transactions on Information Theory},
  volume = {63},
  number = {9},
  pages = {5628--5641},
  issn = {1557-9654},
  doi = {10.1109/TIT.2017.2719044}
}

@misc{chen2024optimal,
  title = {An Optimal Tradeoff between Entanglement and Copy Complexity for State Tomography},
  author = {Chen, Sitan and Li, Jerry and Liu, Allen},
  year = {2024},
  month = feb,
  number = {arXiv:2402.16353},
  eprint = {2402.16353},
  primaryclass = {quant-ph},
  publisher = {arXiv},
  doi = {10.48550/arXiv.2402.16353},
  archiveprefix = {arxiv}
}

@article{eggeling2002Hiding,
  title = {Hiding {{Classical Data}} in {{Multipartite Quantum States}}},
  author = {Eggeling, T. and Werner, R. F.},
  year = 2002,
  month = aug,
  journal = {Physical Review Letters},
  volume = {89},
  number = {9},
  pages = {097905},
  publisher = {American Physical Society},
  doi = {10.1103/PhysRevLett.89.097905},
  note={\href{https://arxiv.org/abs/quant-ph/0203004}{arXiv:quant-ph/0203004}}
}

@article{huang2022Quantum,
  title = {Quantum Advantage in Learning from Experiments},
  author = {Huang, Hsin-Yuan and Broughton, Michael and Cotler, Jordan and Chen, Sitan and Li, Jerry and Mohseni, Masoud and Neven, Hartmut and Babbush, Ryan and Kueng, Richard and Preskill, John and McClean, Jarrod R.},
  year = {2022},
  month = jun,
  journal = {Science},
  volume = {376},
  number = {6598},
  pages = {1182--1186},
  publisher = {American Association for the Advancement of Science},
  doi = {10.1126/science.abn7293}
}

@inproceedings{chen2022Exponential,
  title = {Exponential {{Separations Between Learning With}} and {{Without Quantum Memory}}},
  booktitle = {2021 {{IEEE}} 62nd {{Annual Symposium}} on {{Foundations}} of {{Computer Science}} ({{FOCS}})},
  author = {Chen, Sitan and Cotler, Jordan and Huang, Hsin-Yuan and Li, Jerry},
  year = {2022},
  month = feb,
  pages = {574--585},
  issn = {2575-8454},
  doi = {10.1109/FOCS52979.2021.00063},
  note={\href{https://arxiv.org/abs/2111.05881}{arXiv:2111.05881}}
}

@misc{chen2022When,
  title = {When {{Does Adaptivity Help}} for {{Quantum State Learning}}?},
  author = {Chen, Sitan and Huang, Brice and Li, Jerry and Liu, Allen and Sellke, Mark},
  year = {2022},
  month = jun,
  journal = {arXiv.org},
  howpublished = {https://arxiv.org/abs/2206.05265v2},
  langid = {english}
}

@article{hayden2006Aspects,
  title = {Aspects of {{Generic Entanglement}}},
  author = {Hayden, Patrick and Leung, Debbie W. and Winter, Andreas},
  year = {2006},
  month = jul,
  journal = {Communications in Mathematical Physics},
  volume = {265},
  number = {1},
  pages = {95--117},
  issn = {1432-0916},
  doi = {10.1007/s00220-006-1535-6},
  langid = {english}
}

@article{guhne2009Entanglement,
  title = {Entanglement Detection},
  author = {G{\"u}hne, Otfried and T{\'o}th, G{\'e}za},
  year = {2009},
  month = apr,
  journal = {Physics Reports},
  volume = {474},
  number = {1},
  pages = {1--75},
  issn = {0370-1573},
  doi = {10.1016/j.physrep.2009.02.004}
}

@article{chitambar2014Everything,
  title = {Everything {{You Always Wanted}} to {{Know About LOCC}} ({{But Were Afraid}} to {{Ask}})},
  author = {Chitambar, Eric and Leung, Debbie and Man{\v c}inska, Laura and Ozols, Maris and Winter, Andreas},
  year = {2014},
  month = may,
  journal = {Communications in Mathematical Physics},
  volume = {328},
  number = {1},
  pages = {303--326},
  issn = {1432-0916},
  doi = {10.1007/s00220-014-1953-9},
  langid = {english},
  note={\href{https://arxiv.org/abs/1210.4583}{arXiv:1210.4583}}
}

@article{liu2025Separation,
  title = {Separation between Entanglement Criteria and Entanglement Detection Protocols},
  author = {Liu, Zhenhuan and Wei, Fuchuan},
  year = 2025,
  month = aug,
  journal = {Physical Review Research},
  volume = {7},
  number = {3},
  pages = {033121},
  issn = {2643-1564},
  doi = {10.1103/hm8j-wgqm},
  note={\href{https://arxiv.org/abs/2403.01664}{arXiv:2403.01664}}
}

@article{montanaro2016Survey,
  title = {A {{Survey}} of {{Quantum Property Testing}}},
  author = {Montanaro, Ashley and De Wolf, Ronald},
  year = {2016},
  journal = {Theory of Computing},
  volume = {1},
  number = {1},
  pages = {1--81},
  issn = {1557-2862},
  doi = {10.4086/toc.gs.2016.007},
  langid = {english}
}

@misc{gong2024sample,
  title = {On the Sample Complexity of Purity and Inner Product Estimation},
  author = {Gong, Weiyuan and Haferkamp, Jonas and Ye, Qi and Zhang, Zhihan},
  year = {2024},
  month = oct,
  number = {arXiv:2410.12712},
  eprint = {2410.12712},
  primaryclass = {quant-ph},
  publisher = {arXiv},
  archiveprefix = {arXiv},
  langid = {english},
  note={\href{https://arxiv.org/abs/2410.12712}{arXiv:2410.12712}}
}

@misc{beckey2025Product,
  title = {Product Testing with Single-Copy Measurements},
  author = {Beckey, Jacob and Coffman, Luke and Shlosberg, Ariel and Schatzki, Louis and Leditzky, Felix},
  year = 2025,
  month = oct,
  number = {arXiv:2510.07820},
  eprint = {2510.07820},
  primaryclass = {quant-ph},
  publisher = {arXiv},
  doi = {10.48550/arXiv.2510.07820},
  archiveprefix = {arXiv},
  note={\href{https://arxiv.org/abs/2510.07820}{arXiv:2510.07820}}
}

@misc{arunachalam2026Optimal,
  title = {Optimal {{Stabilizer Testing}} and {{Learning}} with {{Limited Quantum Memory}}},
  author = {Arunachalam, Srinivasan and Schatzki, Louis},
  year = 2026,
  month = jul,
  number = {arXiv:2607.02444},
  eprint = {2607.02444},
  primaryclass = {quant-ph},
  publisher = {arXiv},
  doi = {10.48550/arXiv.2607.02444},
  archiveprefix = {arXiv},
  note={\href{https://arxiv.org/abs/2607.02444}{arXiv:2607.02444}}
}

@article{harrow2023Approximate,
  title = {Approximate Orthogonality of Permutation Operators, with Application to Quantum Information},
  author = {Harrow, Aram W.},
  year = {2023},
  month = dec,
  journal = {Letters in Mathematical Physics},
  volume = {114},
  number = {1},
  eprint = {2309.00715},
  primaryclass = {quant-ph},
  pages = {1},
  issn = {1573-0530},
  doi = {10.1007/s11005-023-01744-1},
  archiveprefix = {arXiv},
  note={\href{https://arxiv.org/abs/2309.00715}{arXiv:2309.00715}}
}

@article{horodecki2009Quantum,
  title = {Quantum Entanglement},
  author = {Horodecki, Ryszard and Horodecki, Pawe{\l} and Horodecki, Micha{\l} and Horodecki, Karol},
  year = {2009},
  month = jun,
  journal = {Reviews of Modern Physics},
  volume = {81},
  number = {2},
  pages = {865--942},
  issn = {0034-6861, 1539-0756},
  doi = {10.1103/RevModPhys.81.865},
  langid = {english}
}

@article{bluvstein2022quantum,
  title = {A Quantum Processor Based on Coherent Transport of Entangled Atom Arrays},
  author = {Bluvstein, Dolev and Levine, Harry and Semeghini, Giulia and Wang, Tout T. and Ebadi, Sepehr and Kalinowski, Marcin and Keesling, Alexander and Maskara, Nishad and Pichler, Hannes and Greiner, Markus and Vuleti{\'c}, Vladan and Lukin, Mikhail D.},
  year = {2022},
  month = apr,
  journal = {Nature},
  volume = {604},
  number = {7906},
  pages = {451--456},
  publisher = {Nature Publishing Group},
  issn = {1476-4687},
  doi = {10.1038/s41586-022-04592-6},
  copyright = {2022 The Author(s)},
  langid = {english}
}

@article{bluvstein2023Logical,
  title = {Logical Quantum Processor Based on Reconfigurable Atom Arrays},
  author = {Bluvstein, Dolev and Evered, Simon J. and Geim, Alexandra A. and Li, Sophie H. and Zhou, Hengyun and Manovitz, Tom and Ebadi, Sepehr and Cain, Madelyn and Kalinowski, Marcin and Hangleiter, Dominik and Ataides, J. Pablo Bonilla and Maskara, Nishad and Cong, Iris and Gao, Xun and Rodriguez, Pedro Sales and Karolyshyn, Thomas and Semeghini, Giulia and Gullans, Michael J. and Greiner, Markus and Vuleti{\'c}, Vladan and Lukin, Mikhail D.},
  year = {2023},
  month = dec,
  journal = {Nature},
  pages = {1--3},
  publisher = {Nature Publishing Group},
  issn = {1476-4687},
  doi = {10.1038/s41586-023-06927-3},
  copyright = {2023 The Author(s), under exclusive licence to Springer Nature Limited},
  langid = {english}
}

@misc{harrow2013Church,
  title = {The {{Church}} of the {{Symmetric Subspace}}},
  author = {Harrow, Aram W.},
  year = {2013},
  month = aug,
  number = {arXiv:1308.6595},
  eprint = {1308.6595},
  publisher = {arXiv},
  doi = {10.48550/arXiv.1308.6595},
  archiveprefix = {arXiv},
  note = {\href{https://arxiv.org/abs/1308.6595}{arXiv:1308.6595}}
}

@inproceedings{bubeck2020Entanglement,
  title = {Entanglement Is {{Necessary}} for {{Optimal Quantum Property Testing}}},
  booktitle = {2020 {{IEEE}} 61st {{Annual Symposium}} on {{Foundations}} of {{Computer Science}} ({{FOCS}})},
  author = {Bubeck, Sebastien and Chen, Sitan and Li, Jerry},
  year = {2020},
  month = nov,
  pages = {692--703},
  issn = {2575-8454},
  doi = {10.1109/FOCS46700.2020.00070}
}

@article{ekert2002Direct,
  title = {Direct {{Estimations}} of {{Linear}} and {{Nonlinear Functionals}} of a {{Quantum State}}},
  author = {Ekert, Artur K. and Alves, Carolina Moura and Oi, Daniel K. L. and Horodecki, Micha{\l} and Horodecki, Pawe{\l} and Kwek, L. C.},
  year = {2002},
  month = may,
  journal = {Physical Review Letters},
  volume = {88},
  number = {21},
  pages = {217901},
  publisher = {American Physical Society},
  doi = {10.1103/PhysRevLett.88.217901}
}

@article{barenco1997Stabilization,
  title = {Stabilization of {{Quantum Computations}} by {{Symmetrization}}},
  author = {Barenco, Adriano and Berthiaume, Andr{\'e} and Deutsch, David and Ekert, Artur and Jozsa, Richard and Macchiavello, Chiara},
  year = {1997},
  month = oct,
  journal = {SIAM Journal on Computing},
  volume = {26},
  number = {5},
  pages = {1541--1557},
  publisher = {{Society for Industrial and Applied Mathematics}},
  issn = {0097-5397},
  doi = {10.1137/S0097539796302452},
  note={\href{https://arxiv.org/abs/quant-ph/9604028}{arXiv:quant-ph/9604028}}
}

@inproceedings{hinsche2025SingleCopy,
  title = {Single-{{Copy Stabilizer Testing}}},
  booktitle = {Proceedings of the 57th {{Annual ACM Symposium}} on {{Theory}} of {{Computing}}},
  author = {Hinsche, Marcel and Helsen, Jonas},
  year = {2025},
  month = jun,
  series = {{{STOC}} '25},
  pages = {439--450},
  publisher = {Association for Computing Machinery},
  address = {New York, NY, USA},
  doi = {10.1145/3717823.3718169},
  isbn = {979-8-4007-1510-5},
  note={\href{https://arxiv.org/abs/2410.07986}{arXiv:2410.07986}}
}

@misc{bakshi2024Learning,
  title = {Learning the Closest Product State},
  author = {Bakshi, Ainesh and Bostanci, John and Kretschmer, William and Landau, Zeph and Li, Jerry and Liu, Allen and O'Donnell, Ryan and Tang, Ewin},
  year = 2024,
  month = nov,
  number = {arXiv:2411.04283},
  eprint = {2411.04283},
  primaryclass = {quant-ph},
  publisher = {arXiv},
  doi = {10.48550/arXiv.2411.04283},
  archiveprefix = {arXiv}
}

@article{beigi2010Approximating,
  title = {Approximating the Set of Separable States Using the Positive Partial Transpose Test},
  author = {Beigi, Salman and Shor, Peter W.},
  year = 2010,
  month = apr,
  journal = {Journal of Mathematical Physics},
  volume = {51},
  number = {4},
  pages = {042202},
  issn = {0022-2488},
  doi = {10.1063/1.3364793}
}

@article{daguerre2025Experimental,
  title = {Experimental {{Demonstration}} of {{High-Fidelity Logical Magic States}} from {{Code Switching}}},
  author = {Daguerre, Lucas and {Blume-Kohout}, Robin and Brown, Natalie C. and Hayes, David and Kim, Isaac H.},
  year = 2025,
  month = oct,
  journal = {Physical Review X},
  volume = {15},
  number = {4},
  pages = {041008},
  publisher = {American Physical Society},
  doi = {10.1103/dck4-x9c2}
}

@unpublished{akresh2026Optimal,
    author = {Oren Akresh and Felix Leditzky and Jacob Beckey},
    title  = {Optimal Quantum Data Hiding with Multipartite Werner States},
    note   = {In preparation},
    year   = {2026}
}

@article{mintert2005Concurrence,
  title = {Concurrence of {{Mixed Multipartite Quantum States}}},
  author = {Mintert, Florian and Ku{\'s}, Marek and Buchleitner, Andreas},
  year = 2005,
  month = dec,
  journal = {Physical Review Letters},
  volume = {95},
  number = {26},
  pages = {260502},
  publisher = {American Physical Society},
  doi = {10.1103/PhysRevLett.95.260502}
}

@article{bennett1999Unextendible,
  title = {Unextendible {{Product Bases}} and {{Bound Entanglement}}},
  author = {Bennett, Charles H. and DiVincenzo, David P. and Mor, Tal and Shor, Peter W. and Smolin, John A. and Terhal, Barbara M.},
  year = 1999,
  month = jun,
  journal = {Physical Review Letters},
  volume = {82},
  number = {26},
  pages = {5385--5388},
  publisher = {American Physical Society},
  doi = {10.1103/PhysRevLett.82.5385}
}

@article{brennen2003observable,
  title = {An Observable Measure of Entanglement for Pure States of Multi-Qubit Systems},
  author = {Brennen, Gavin K.},
  year = 2003,
  month = nov,
  journal = {Quantum Info. Comput.},
  volume = {3},
  number = {6},
  pages = {619--626},
  issn = {1533-7146}
}

@article{horodecki1997Separability,
  title = {Separability Criterion and Inseparable Mixed States with Positive Partial Transposition},
  author = {Horodecki, Pawel},
  year = 1997,
  month = aug,
  journal = {Physics Letters A},
  volume = {232},
  number = {5},
  pages = {333--339},
  issn = {0375-9601},
  doi = {10.1016/S0375-9601(97)00416-7}
}

@article{liu2023Complexity,
  title = {The {{Round Complexity}} of {{Local Operations}} and {{Classical Communication}} ({{LOCC}}) in {{Random-Party Entanglement Distillation}}},
  author = {Liu, Guangkuo and George, Ian and Chitambar, Eric},
  year = 2023,
  month = sep,
  journal = {Quantum},
  volume = {7},
  pages = {1104},
  publisher = {Verein zur F\"orderung des Open Access Publizierens in den Quantenwissenschaften},
  doi = {10.22331/q-2023-09-07-1104},
  langid = {british}
}

@article{cheng2023Discrimination,
  title = {Discrimination of {{Quantum States Under Locality Constraints}} in the {{Many-Copy Setting}}},
  author = {Cheng, Hao-Chung and Winter, Andreas and Yu, Nengkun},
  year = 2023,
  month = nov,
  journal = {Communications in Mathematical Physics},
  volume = {404},
  number = {1},
  pages = {151--183},
  issn = {1432-0916},
  doi = {10.1007/s00220-023-04836-0},
  langid = {english}
}

@article{bennett1999Quantum,
  title = {Quantum Nonlocality without Entanglement},
  author = {Bennett, Charles H. and DiVincenzo, David P. and Fuchs, Christopher A. and Mor, Tal and Rains, Eric and Shor, Peter W. and Smolin, John A. and Wootters, William K.},
  year = 1999,
  month = feb,
  journal = {Physical Review A},
  volume = {59},
  number = {2},
  pages = {1070--1091},
  publisher = {American Physical Society},
  doi = {10.1103/PhysRevA.59.1070}
}

@article{miller2026Experimental,
  title = {Experimental Measurement and a Physical Interpretation of Quantum Shadow Enumerators},
  author = {Miller, Daniel and Levi, Kyano and Postler, Lukas and Steiner, Alex and Bittel, Lennart and White, Gregory A. L. and Tang, Yifan and Kuehnke, Eric J. and Mele, Antonio A. and Khatri, Sumeet and Leone, Lorenzo and Carrasco, Jose and Marciniak, Christian D. and Pogorelov, Ivan and {Guevara-Bertsch}, Milena and Freund, Robert and Blatt, Rainer and Schindler, Philipp and Monz, Thomas and Ringbauer, Martin and Eisert, Jens},
  year = 2026,
  month = jun,
  journal = {Physical Review Research},
  volume = {8},
  number = {2},
  pages = {023318},
  publisher = {American Physical Society},
  doi = {10.1103/8h3b-brg1}
}

@article{matthews2009Distinguishability,
  title = {Distinguishability of {{Quantum States Under Restricted Families}} of {{Measurements}} with an {{Application}} to {{Quantum Data Hiding}}},
  author = {Matthews, William and Wehner, Stephanie and Winter, Andreas},
  year = 2009,
  month = nov,
  journal = {Communications in Mathematical Physics},
  volume = {291},
  number = {3},
  pages = {813--843},
  issn = {1432-0916},
  doi = {10.1007/s00220-009-0890-5},
  langid = {english}
}

@article{rains2001semidefinite,
  title = {A Semidefinite Program for Distillable Entanglement},
  author = {Rains, E.M.},
  year = 2001,
  month = nov,
  journal = {IEEE Transactions on Information Theory},
  volume = {47},
  number = {7},
  pages = {2921--2933},
  issn = {1557-9654},
  doi = {10.1109/18.959270}
}

@article{horodecki1996Separability,
  title = {Separability of Mixed States: Necessary and Sufficient Conditions},
  shorttitle = {Separability of Mixed States},
  author = {Horodecki, Micha{\l} and Horodecki, Pawe{\l} and Horodecki, Ryszard},
  year = 1996,
  month = nov,
  journal = {Physics Letters A},
  volume = {223},
  number = {1},
  pages = {1--8},
  issn = {0375-9601},
  doi = {10.1016/S0375-9601(96)00706-2}
}

@article{peres1996Separability,
  title = {Separability {{Criterion}} for {{Density Matrices}}},
  author = {Peres, Asher},
  year = 1996,
  month = aug,
  journal = {Physical Review Letters},
  volume = {77},
  number = {8},
  pages = {1413--1415},
  issn = {0031-9007, 1079-7114},
  doi = {10.1103/PhysRevLett.77.1413},
  langid = {english}
}

@misc{noller2025infinite,
  title = {An Infinite Hierarchy of Multi-Copy Quantum Learning Tasks},
  author = {N{\"o}ller, Jan and Tran, Viet T. and Gachechiladze, Mariami and Kueng, Richard},
  year = 2025,
  publisher = {arXiv},
  doi = {10.48550/ARXIV.2510.08070},
  copyright = {Creative Commons Attribution 4.0 International}
}

@misc{ye2025Exponential,
  title = {Exponential {{Advantage}} from {{One More Replica}} in {{Estimating Nonlinear Properties}} of {{Quantum States}}},
  author = {Ye, Qi and Liu, Zhenhuan and Deng, Dong-Ling},
  year = 2025,
  publisher = {arXiv},
  doi = {10.48550/ARXIV.2509.24000},
  copyright = {Creative Commons Attribution 4.0 International}
}

@inproceedings{barak2012Hypercontractivity,
  title = {Hypercontractivity, Sum-of-Squares Proofs, and Their Applications},
  booktitle = {Proceedings of the Forty-Fourth Annual {{ACM}} Symposium on {{Theory}} of Computing},
  author = {Barak, Boaz and Brandao, Fernando G.S.L. and Harrow, Aram W. and Kelner, Jonathan and Steurer, David and Zhou, Yuan},
  year = 2012,
  month = may,
  series = {{{STOC}} '12},
  pages = {307--326},
  publisher = {Association for Computing Machinery},
  address = {New York, NY, USA},
  doi = {10.1145/2213977.2214006},
  isbn = {978-1-4503-1245-5}
}

@article{brandao2016ProductState,
  title = {Product-{{State Approximations}} to {{Quantum States}}},
  author = {Brand{\~a}o, Fernando G. S. L. and Harrow, Aram W.},
  year = 2016,
  month = feb,
  journal = {Communications in Mathematical Physics},
  volume = {342},
  number = {1},
  pages = {47--80},
  issn = {1432-0916},
  doi = {10.1007/s00220-016-2575-1},
  langid = {english}
}

@article{jeronimo2026QMA2,
  title = {The {{QMA}}(2) {{UniverseComplexity}}, {{Entanglement}}, and {{Optimization}}},
  author = {Jeronimo, Fernando Granha and Wu, Pei and Leigh, Itai},
  year = 2026,
  month = mar,
  journal = {ACM SIGACT News},
  volume = {57},
  number = {1},
  pages = {64--99},
  issn = {0163-5700},
  doi = {10.1145/3802807.3802814}
}

@inproceedings{harrow_testing_2010,
    title = {Testing product states, quantum {Merlin}-{Arthur} games and tensor optimisation},
    url = {http://arxiv.org/abs/1001.0017},
    doi = {10.1109/FOCS.2010.66 10.1145/2432622.2432625},
    urldate = {2025-04-03},
    booktitle = {2010 {IEEE} 51st {Annual} {Symposium} on {Foundations} of {Computer} {Science}},
    author = {Harrow, Aram W. and Montanaro, Ashley},
    month = oct,
    year = {2010},
    note = {arXiv:1001.0017 [quant-ph]},
    pages = {633--642},
}

@misc{chen2024Optimala,
  title = {Optimal Tradeoffs for Estimating {{Pauli}} Observables},
  author = {Chen, Sitan and Gong, Weiyuan and Ye, Qi},
  year = {2024},
  month = apr,
  number = {arXiv:2404.19105},
  eprint = {2404.19105},
  publisher = {arXiv},
  doi = {10.48550/arXiv.2404.19105},
  archiveprefix = {arXiv}
}
\onecolumngrid
\section{End Matter}
\twocolumngrid
\lsec{Preliminaries} Let $\SC_n$ denote the symmetric group on $n$ objects. We denote the unitary representation of $\SC_n$ on $(\mathbb{C}^d)^{\otimes n}$ by $P_d(\pi)$. Crucially, because it is a representation, we have $P_d(\pi \sigma) =P_d(\pi) P_d(\sigma)$ and $P_d(e) = \Id_d^{\otimes n}$, where $e\in \SC_n$ is the identity permutation. Let $\vee^{n}(\mathbb{C}^d) \subseteq (\mathbb{C}^d)^{\otimes n}$ denote the symmetric subspace. The projector onto this subspace can be expressed as (cf. Ref.~\cite{harrow2013Church})
\begin{align}
\Pi_{\mathrm{sym}}^{d,n} = \frac{1}{n!} \sum_{\pi \in \SC_n} P_d(\pi),
\end{align}
with the dimension of this subspace being given as
\begin{align}
     D_n := \dim\left(\vee^{n}(\mathbb{C}^d)\right) = \binom{d+n-1}{n}.
\end{align}
When the local dimension is clear from context, we let $\Pi_n \coloneq \Pi_{\mathrm{sym}}^{d,n}$ to ease notation. We define the maximally mixed state on the symmetric subspace as $\sigma^{(n)} \eqcolon \Pi_n/D_n$. Let the group of $d$-by-$d$ unitary matrices be denoted $\UC_d$. The Haar measure is the unique left- and right-invariant measure on this group. We define a Haar random state to be $\ket{\psi} = U \ket{0}$, where $U$ is sampled according to the Haar measure on $\UC_d$.  

The average of $n$ copies of a Haar random pure state yields the maximally mixed state on the symmetric subspace
\begin{align}
        \underset{U \in \UC_d}{\mathbb{E}}\left[U^{\otimes n} \ketbra{0}{0}(U^\dagger)^{\otimes n}\right] = \sigma^{(n)}.
    \end{align}
The partial transposition operation, denoted $\Gamma_S$, transposes indices $i \in S \subseteq [n]$ and leaves the others unchanged. Let $X,Y$ be bounded operators on $(\mathbb{C}^d)^{\otimes n}$. Expanding in an operator basis, one can verify $\tr{X^{\Gamma_S}} = \tr{X}$ and $\tr{X^{\Gamma_S} Y^{\Gamma_S}}= \tr{XY}$. 

We denote the SWAP operator as $\mathbb{F} = P_d((ij))$, where $(ij)$ is cycle notation indicating which indices are being swapped. A well-known fact from quantum information theory is that partially transposing half of the SWAP operator yields $d$ times the maximally entangled projector. In our notation, we write
\begin{align}
    P_d((ij))^{\Gamma_{\{i\}}} = P_d((ij))^{\Gamma_{\{j\}}} = d \psi_{i,j}.
\end{align}
Let $M \succeq 0$. Then a two-outcome POVM $\{M,\Id-M\}$ on $(\mathbb{C}^d)^{\otimes n}$ is said to be PPT-BOTH if both $M$ and $\Id-M$ remain positive after partial transposition of any subset of indices. These conditions can be succinctly summarized as
\begin{align} 
    0 \preceq M^{\Gamma_S} \preceq \Id \quad \forall~S \subseteq [n].
\end{align}
Let $A,B$ be bounded operators. The trace norm is denoted $\|A\|_1 = \tr{\sqrt{A^{\dagger}A}}$ and, when $A \succeq 0$, simplifies to $\|A\|_1=\tr{A}$. The trace norm is multiplicative over tensor products $\|A \otimes B \|_1 = \|A\|_1 \|B\|_1.$ With these facts in mind, we make two observations. First, if $X,Y \succeq 0$ and $[X,Y]=0,$ then $XY \succeq 0$. Second, if $X$ satisfies $0 \preceq X \preceq \Id$, and $Y$ is traceless and Hermitian, then $\tr{XY} \leq \frac{1}{2}\|Y\|_1$. This follows from decomposing $Y=Y_+ - Y_-$, where $Y_+, Y_- \succeq 0$ and noting that tracelessness implies $\tr{Y_+}=\tr{Y_-}=\|Y\|_1/2$. With our notation and basic facts in mind, we proceed to the proof of our main results. See the Supplemental Material for more detailed proofs, relations to prior work, and a discussion of the algorithms that saturate our lower bounds.

\lsec{Purity Testing}
As outlined in the main text, the first step in our proof is rewriting the difference of the two states using a telescoping sum. Noting that $\sigma^{(1)} = \tau_d$, we are able to write
    \begin{align}
        \Delta 
        &= \sum_{m=2}^n (\sigma^{(m)} - \sigma^{(m-1)}\otimes \tau_d)\otimes \tau_d^{\otimes (n-m)},\\
        &=  \sum_{m=2}^n \DC_m(\sigma^{(m)}) \otimes \tau_d^{\otimes (n-m)},
    \end{align}
    where we have defined $\DC_m(\sigma^{(m)}) \coloneq \sigma^{(m)} - \sigma^{(m-1)}\otimes \tau_d$, which we note is traceless and Hermitian. We may then bound the bias as
    \begin{align}
        \tr{M \Delta} &= \sum_{m=2}^n \tr{M \DC_m(\sigma^{(m)}) \otimes \tau_d^{\otimes (n-m)}}, \\
        &= \sum_{m=2}^n \tr{M^{\Gamma_m}\DC_m(\sigma^{(m)})^{\Gamma_m} \otimes \tau_d^{\otimes (n-m)} }, \\
        &\leq \sum_{m=2}^n \frac{1}{2} \left\|\DC_m(\sigma^{(m)})^{\Gamma_m} \right\|_1, 
    \end{align}
    where we have used the PPT-BOTH condition to ensure $0 \preceq M^{\Gamma_m} \preceq \Id$. This, coupled with the Hermiticity and tracelessness of $\DC_{m}(\sigma^{(m)})^{\Gamma_m}$ allows us to apply the trace norm lemma from the preliminaries. All that remains is to prove that 
    \begin{align}
        \frac{1}{2} \left\|\DC_m(\sigma^{(m)})^{\Gamma_m} \right\|_1 \leq \frac{m-1}{d+m-1},
    \end{align}
    from which the $O(n^2/d)$ bound follows simply. 

    \begin{lemma}\label{lem:last-leg-depolarize-trace-norm-bound}
    Let $\DC_m(\sigma^{(m)})= (\sigma^{(m)} - \sigma^{(m-1)}\otimes \tau_d)$ and denote the partial transposition of the $m$-th leg as $\Gamma_m$. Then, we have \begin{align}
        \frac{1}{2}\|\DC_m(\sigma^{(m)})^{\Gamma_m}\|_1 \leq \frac{m-1}{d+m-1}.
    \end{align}
\end{lemma}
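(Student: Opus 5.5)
The plan is to write $\sigma^{(m)}$ in a form where the partial transpose on the last leg is explicit. Then $\DC_m(\sigma^{(m)})^{\Gamma_m}$ splits as a positive operator minus a positive multiple of $\sigma^{(m-1)}\otimes\tau_d$, and the trace norm follows from tracelessness and the triangle inequality. The first step is a coset decomposition of $\SC_m$ over $\SC_{m-1}$, where $\SC_{m-1}$ is the stabilizer of site $m$. Every $\pi\in\SC_m$ is uniquely $\pi'\tau$ with $\pi'\in\SC_{m-1}$ and $\tau\in\{e,(1m),\dots,((m-1)m)\}$. Using that $P_d$ is a representation, this gives
\begin{align}
\Pi_m=\frac{1}{m}\left(\Pi_{m-1}\otimes\Id_d\right)\Big(\Id+\sum_{i=1}^{m-1}P_d((im))\Big).
\end{align}

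Next I apply $\Gamma_m$ term by term. Since $\Pi_{m-1}\otimes\Id_d$ acts trivially on leg $m$, transposing leg $m$ only affects the second factor. This uses the identity $(A\otimes\Id)^{\Gamma_m}=A\otimes\Id$ together with $((A\otimes\Id)B)^{\Gamma_m}=(A\otimes\Id)B^{\Gamma_m}$, which holds because $A$ does not touch leg $m$. Applying the SWAP identity $P_d((im))^{\Gamma_{\{m\}}}=d\,\psi_{i,m}$ then yields
\begin{align}
(\sigma^{(m)})^{\Gamma_m}=\frac{1}{mD_m}\left(\Pi_{m-1}\otimes\Id_d\right)+\frac{d}{mD_m}\left(\Pi_{m-1}\otimes\Id_d\right)\sum_{i=1}^{m-1}\psi_{i,m}.
\end{align}
From $D_m=D_{m-1}(d+m-1)/m$, the first term equals $\frac{d}{d+m-1}\,\sigma^{(m-1)}\otimes\tau_d$. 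Subtracting $\sigma^{(m-1)}\otimes\tau_d$, which is invariant under $\Gamma_m$, gives
\begin{align}
\DC_m(\sigma^{(m)})^{\Gamma_m}=Q-\frac{m-1}{d+m-1}\,\sigma^{(m-1)}\otimes\tau_d,\qquad Q\coloneq\frac{d}{mD_m}\left(\Pi_{m-1}\otimes\Id_d\right)\sum_{i=1}^{m-1}\psi_{i,m}.
\end{align}

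The main point to check is that $Q\succeq0$. The operator $\sum_{i<m}\psi_{i,m}$ is positive. Conjugating it by $P_d(\pi)\otimes\Id$ for $\pi\in\SC_{m-1}$ only permutes the index $i$, so it commutes with every such operator and hence with $\Pi_{m-1}\otimes\Id_d$. Both factors are positive and they commute, so by the first observation in the preliminaries the product is positive.

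Finally, partial transposition preserves trace and $\DC_m(\sigma^{(m)})$ is traceless. Therefore $\tr{Q}=\frac{m-1}{d+m-1}$. The triangle inequality with $\|Q\|_1=\tr{Q}$ and $\|\sigma^{(m-1)}\otimes\tau_d\|_1=1$ gives $\|\DC_m(\sigma^{(m)})^{\Gamma_m}\|_1\le 2\frac{m-1}{d+m-1}$, which is the claim.
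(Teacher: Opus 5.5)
Your proposal is correct and follows the paper's proof essentially step for step. It uses the same coset factorization $\Pi_m=\frac{1}{m}(\Pi_{m-1}\otimes\Id)\bigl(\Id+\sum_{i<m}P_d((im))\bigr)$, the identity $P_d((im))^{\Gamma_m}=d\,\psi_{i,m}$, positivity of $Q$ via $[\Pi_{m-1}\otimes\Id,\sum_{i<m}\psi_{i,m}]=0$, and a final triangle inequality; the only minor difference is that you get $\tr{Q}=\frac{m-1}{d+m-1}$ from tracelessness of $\DC_m(\sigma^{(m)})^{\Gamma_m}$ rather than computing $\tr{A_m\Psi_m}=(m-1)D_{m-1}/d$ by partial traces.
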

\begin{proof}
    To begin, expand the symmetric subspace projector and isolate the last tensor factor \begin{align}
         \Pi_m  &= \frac{1}{m!}\sum_{\pi \in S_m}P_d(\pi), \\
        & = \frac{1}{m}\left(\Pi_{m-1} \otimes \mathbb{I}_m\right)\left(\mathbb{I} + \sum_{i=1}^{m-1}P_d((im))\right).
    \end{align}
    Taking the partial transposition of the $m$-th site yields 
    \begin{align}
          \Pi_m^{\Gamma_m}& =  \frac{1}{m}\left(\Pi_{m-1} \otimes \mathbb{I}_m\right)\left(\mathbb{I} + \sum_{i=1}^{m-1}d \psi_{i,m}\right).
    \end{align}
   To ease the remaining manipulation, define the positive operators \begin{align}
        A_m \coloneq \Pi_{m-1} \otimes \mathbb{I}_m \quad \text{and} \quad \Psi_m\coloneq \sum_{i=1}^{m-1}\psi_{i,m},
    \end{align}
    which allows us to rewrite \begin{align}
        \Pi_m^{\Gamma_m} = \frac{1}{m}A_m(\mathbb{I}+d \Psi_m).
    \end{align}
    Next, let $\pi' \in S_{m-1}$ be an arbitrary permutation of the first $m-1$ sites. Observing that $[P_d(\pi'),\Psi_m] =0$, linearity allows us to conclude that $[A_m,\Psi_m] = 0$. Thus $0 \preceq A_m\Psi_m$, which in turn implies $\|A_m\Psi_m\|_1 = \tr{A_m\Psi_m}$. With this in mind, we note that $\|A_m\|_1 = d D_{m-1}$ and $\|A_m\Psi_m\|_1 = (m-1)D_{m-1}/d.$

    With all of these facts, we may write the partially transposed single-erasure operator as \begin{align}
        &\DC_m(\sigma^{(m)})^{\Gamma_m} = \frac{1}{D_m}\Pi_m^{\Gamma_m} - \frac{1}{dD_{m-1}}\Pi_{m-1}\otimes \mathbb{I},\\
        &= \left(\frac{1}{mD_m}-\frac{1}{d D_{m-1}}\right)A_m + \frac{d}{m D_m}A_m\Psi_m
    \end{align}
    A final application of the triangle inequality and division by $2$ yields \begin{align}
        \frac{1}{2}\|\DC_m(\sigma^{(m)})^{\Gamma_m}\|_1 \leq \frac{m-1}{d+m-1},
    \end{align}
    as desired. Finally, note that the theorem statement $O(n^2/d)$ is trivial unless $n=O(\sqrt{d})$. In this regime, we may upper bound $\frac{m-1}{d+m-1} \leq \frac{m-1}{d}$. Summing from $m=2$ to $n$ yields the desired $O(n^2/d)$ upper bound. Applying the same argument to $\Id-M$ and using $\tr{\Delta}=0$ yields the two-sided bound.
\end{proof}

\lsec{Product Testing}
As mentioned in the main text, our task of distinguishing $\rho_{\mathrm{prod}}^{(n)}$ and $\rho_{\mathrm{far}}^{(n)}$ reduces to showing \begin{align}
    \abs{\tr{M(\rho_{\mathrm{prod}}^{(n)} - \tau_{d^2}^{\otimes n})}} \leq O\left(\frac{n^2}{d}\right).
\end{align}
Similar to the purity testing proof, we can rewrite the difference of these two states using a telescoping sum. Recall that $\rho_{\mathrm{prod}}^{(m)} = \sigma_A^{(m)}\tilde{\otimes} \sigma_B^{(m)}$, where $\sigma_A^{(m)}$ and $\sigma_B^{(m)}$ denote the maximally mixed states on the symmetric subspace of Alice and Bob's respective systems. It follows that the difference of states can be written as 
\begin{align}
\begin{split}
&\rho_{\mathrm{prod}}^{(n)} - \tau_{d^2}^{\otimes n} \\
&= \sum_{m=2}^n \left[\rho_{\mathrm{prod}}^{(m)} - \rho_{\mathrm{prod}}^{(m-1)} \otimes \tau_{d^2}\right]\otimes \tau_{d^2}^{\otimes(n-m)}, \\
& = \sum_{m=2}^n\left[\DC_m(\sigma_A^{(m)}) \tilde{\otimes} \sigma_B^{(m)}\right] \otimes \tau_{d^2}^{\otimes (n-m)} \\
&+ \sum_{m=2}^n\left[(\sigma_A^{(m-1)} \otimes \tau_d) \tilde{\otimes}\DC_m(\sigma_B^{(m)})\right]\otimes \tau_{d^2}^{\otimes (n-m)}.
\end{split}
\end{align}
The bias can then be upper bounded using Theorem~\ref{thm:eps-ind-PPT-purity-LB}, yielding \begin{align}
    &\abs{\tr{M(\rho_{\mathrm{prod}}^{(n)} - \tau_{d^2}^{\otimes n})}}  \\
    & \leq  \frac{1}{2}\sum_{m=2}^n\left\|\DC_m(\sigma_A^{(m)})^{\Gamma_m} \right\|_1 + \left\|\DC_m(\sigma_B^{(m)})^{\Gamma_m}\right\|_1, \\
    & \leq 2 \sum_{m=2}^n \frac{m-1}{d}, \\
    & = O\left(\frac{n^2}{d}\right).
\end{align}
\newpage
\onecolumngrid
\appendix
\section{Supplemental Material}
\subsection{Purity Testing Details}
\begin{theorem}[Theorem~\ref{thm:eps-ind-PPT-purity-LB} Restated]
For every $d,n\geq2$ and every PPT-BOTH measurement $\{M,\Id-M\}$,
\begin{align}
\abs{
\tr{
M\left(
\sigma^{(n)}-\tau_d^{\otimes n}
\right)
}
}
\leq
\frac{n(n-1)}{2d}.
\end{align}
Consequently, constant-bias purity testing under PPT-BOTH measurements requires $n=\Omega(\sqrt d)$ samples.
\end{theorem}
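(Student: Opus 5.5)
We will prove the restated Theorem by the telescoping argument of the End Matter, now carried out exactly rather than asymptotically. The first step is to write $\Delta=\sigma^{(n)}-\tau_d^{\otimes n}$ as
\begin{align}
\Delta=\sum_{m=2}^n \DC_m(\sigma^{(m)})\otimes\tau_d^{\otimes(n-m)} ,
\end{align}
using $\sigma^{(1)}=\tau_d$; the sum telescopes to $\sigma^{(n)}-\tau_d^{\otimes n}$. Each summand is Hermitian and traceless, since the partial trace of $\sigma^{(m)}$ over the last site is $\sigma^{(m-1)}$.

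Next, for each $m$ we use $\tr{XY}=\tr{X^{\Gamma_S}Y^{\Gamma_S}}$ with $S=\{m\}$. The operator $\tau_d^{\otimes(n-m)}$ is invariant under transposition, so the $m$-th term equals
\begin{align}
\tr{M^{\Gamma_m}\bigl(\DC_m(\sigma^{(m)})^{\Gamma_m}\otimes\tau_d^{\otimes(n-m)}\bigr)} .
\end{align}
The PPT-BOTH condition gives $0\preceq M^{\Gamma_m}\preceq\Id$. The second operator is Hermitian and traceless, because partial transposition preserves both properties. The effect lemma from the Preliminaries then bounds the term by $\tfrac12\|\DC_m(\sigma^{(m)})^{\Gamma_m}\|_1\,\|\tau_d^{\otimes(n-m)}\|_1$. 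Here $\|\tau_d^{\otimes(n-m)}\|_1=1$ by multiplicativity of the trace norm. Lemma~\ref{lem:last-leg-depolarize-trace-norm-bound} then gives the bound $\frac{m-1}{d+m-1}\le\frac{m-1}{d}$, valid for all $d,m$ without any regime assumption.

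Summing over $m=2,\dots,n$ gives
\begin{align}
\tr{M\Delta}\le\sum_{m=2}^n\frac{m-1}{d}=\frac{n(n-1)}{2d}.
\end{align}
For the absolute value, we repeat the same argument with $\Id-M$, which is also PPT-BOTH since $0\preceq(\Id-M)^{\Gamma_S}\preceq\Id$. Because $\tr{\Delta}=0$, we have $\tr{(\Id-M)\Delta}=-\tr{M\Delta}$, which gives the matching lower bound.

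For the sample-complexity consequence, recall that the pure-versus-maximally-mixed reduction from the main text requires any purity tester to achieve bias at least $1/3$. Combined with the bound above this gives $n(n-1)\ge 2d/3$, hence $n=\Omega(\sqrt d)$.

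The only step needing care is confirming that Lemma~\ref{lem:last-leg-depolarize-trace-norm-bound} really holds with the stated constant. We would verify it by direct computation. The coefficient of $A_m$ is
\begin{align}
\frac{1}{mD_m}-\frac{1}{dD_{m-1}} .
\end{align}
Using $mD_m=(d+m-1)D_{m-1}$, this equals $-\frac{m-1}{d(d+m-1)D_{m-1}}$, which is negative. Multiplying by $\|A_m\|_1=dD_{m-1}$, the first piece has trace norm $\frac{m-1}{d+m-1}$.

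The second piece is $\frac{d}{mD_m}A_m\Psi_m$, which is positive semidefinite because $A_m$ and $\Psi_m$ commute. Its trace norm is $\frac{d}{mD_m}\cdot\frac{(m-1)D_{m-1}}{d}=\frac{m-1}{d+m-1}$.

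By the triangle inequality the total trace norm is at most $\frac{2(m-1)}{d+m-1}$, and halving it gives the lemma exactly. The result is therefore uniform in $d,n\ge2$, with no appeal to $n=O(\sqrt d)$.
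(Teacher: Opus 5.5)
Your proposal is correct and follows the paper's proof: the telescoping decomposition of $\Delta$, partial transposition of the $m$-th register so the effect bound applies, the trace-norm lemma via $A_m$ and $A_m\Psi_m$, and the two-sided bound obtained from $\Id-M$ as in the End Matter. Your explicit check that $mD_m=(d+m-1)D_{m-1}$ makes both pieces equal $\frac{m-1}{d+m-1}$ is right. So is your remark that $\frac{m-1}{d+m-1}\le\frac{m-1}{d}$ holds for all $d,m$ without the End Matter's $n=O(\sqrt d)$ assumption, as in the Supplemental Material's version.
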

\begin{proof}
    First, note that any $\eps$-purity tester could certainly distinguish between a Haar random pure state and the maximally mixed state. Thus, if we can show that a Haar random pure state is difficult to distinguish from the maximally mixed state using only PPT-BOTH measurements, this will imply hardness of purity testing. 

    It is helpful to imagine a referee preparing, with equal probability, either the maximally mixed state $\tau_d$ or a Haar random pure state $\psi$. In either case, we will get $n$ identical copies of this state. Because we do not know what Haar random pure state we are given $n$ copies of, the density matrix representing our knowledge of the underlying system in this case is 
    \begin{align}
       \underset{U \in \UC_d}{\mathbb{E}}\left[U^{\otimes n} \ketbra{0}{0}(U^\dagger)^{\otimes n}\right] = \frac{\Pi_n}{D_n} \eqcolon \sigma^{(n)},
    \end{align}
    where we have introduced the shorthand $\Pi_n \coloneq \Pi_{\rm sym}^{d,n}$ and $D_n \coloneq \tr{\Pi_{\rm sym}^{d,n}}$, which will ease the algebraic manipulations below. Thus, distinguishing $n$ copies of a Haar random state from $n$ copies of the maximally mixed state is equivalent to distinguishing the maximally mixed state on $(\mathbb{C}^d)^{\otimes n}$ from the maximally mixed state on the symmetric subspace $\vee^n (\mathbb{C}^d)$.

    Let $\Delta \coloneq \sigma^{(n)} - \tau_d^{\otimes n}$ denote the difference of the two density operators. The assumption that a PPT-BOTH distinguisher exists is equivalent to assuming an $\Omega(1)$ lower bound on the bias. If we can establish a $O(n^2/d)$ upper bound on the bias, we may conclude 
    \begin{align}
        \Omega(1) \leq \abs{\tr{M\Delta}} \leq O\left(\frac{n^2}{d}\right) \implies n = \Omega(\sqrt{d}).
    \end{align}
    As outlined in the main text, the first step in our proof is rewriting the difference of the two states using a telescoping sum. Noting that $\sigma^{(1)} = \tau_d$, we are able to write
    \begin{align}
        \Delta &= \sigma^{(n)} - \sigma^{(n-1)}\otimes \tau_d + \sigma^{(n-1)}\otimes \tau_d - \dotsm - \sigma^{(2)} \otimes \tau_d^{\otimes (n-2)} + \sigma^{(2)} \otimes \tau_d^{\otimes (n-2)} - \tau_d^{\otimes n},\\
        &= \sum_{m=2}^n (\sigma^{(m)} - \sigma^{(m-1)}\otimes \tau_d)\otimes \tau_d^{\otimes (n-m)},\\
        &=  \sum_{m=2}^n \DC_m(\sigma^{(m)}) \otimes \tau_d^{\otimes (n-m)},
    \end{align}
    where we have defined $\DC_m(\sigma^{(m)}) \coloneq \sigma^{(m)} - \sigma^{(m-1)}\otimes \tau_d$, which we refer to as the \textit{single-depolarization operator}. Returning to bound the magnitude of the bias, we write
    \begin{align}
        \abs{\tr{M \Delta}} &= \abs{\sum_{m=2}^n \tr{M \DC_m(\sigma^{(m)}) \otimes \tau_d^{\otimes (n-m)}}}, \quad &\text{linearity of trace}\\
        &\leq \sum_{m=2}^n \abs{\tr{M\DC_m(\sigma^{(m)}) \otimes \tau_d^{\otimes (n-m)} }}, \quad &\text{triangle inequality}\\
        &= \sum_{m=2}^n \abs{\tr{M^{\Gamma_m}\DC_m(\sigma^{(m)})^{\Gamma_m} \otimes \tau_d^{\otimes (n-m)} }}, \quad \\
        &\leq \sum_{m=2}^n \frac{1}{2} \left\|\DC_m(\sigma^{(m)})^{\Gamma_m} \otimes \tau_d^{\otimes (n-m)} \right\|_1, \\
        &= \sum_{m=2}^n \frac{1}{2} \left\|\DC_m(\sigma^{(m)})^{\Gamma_m} \right\|_1, \\
        &\leq \sum_{m=2}^n \frac{m-1}{d+m-1}, \quad &\text{Lemma}~\ref{lem:last-leg-depolarize-trace-norm-bound},\\
        &\leq \sum_{m=2}^n \frac{m-1}{d}, 
    \end{align}
    which upon summing yields the desired bound
    \begin{align}
        \abs{\tr{M \Delta}} \leq \frac{n(n-1)}{2 d} = O\left(\frac{n^2}{d}\right).
    \end{align}
\end{proof}

\begin{lemma}[Lemma~\ref{lem:last-leg-depolarize-trace-norm-bound} Restated]
    Let $\DC_m(\sigma^{(m)})= (\sigma^{(m)} - \sigma^{(m-1)}\otimes \tau_d)$ and denote the partial transposition of the $m$-th leg as $\Gamma_m$. Then, we have \begin{align}
        \frac{1}{2}\|\DC_m(\sigma^{(m)})^{\Gamma_m}\|_1 \leq \frac{m-1}{d+m-1}.
    \end{align}
\end{lemma}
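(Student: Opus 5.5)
The plan is to follow the structure already sketched in the End Matter, checking each constant. First, I will expand the symmetric projector by isolating where the last site $m$ is sent. Every $\pi\in\SC_m$ factors uniquely as $\pi'\circ\tau$, where $\pi'\in\SC_{m-1}$ fixes $m$ and $\tau$ is either the identity or a transposition $(im)$ with $i\in[m-1]$. This factorization gives
\begin{align}
\Pi_m=\frac{1}{m}\left(\Pi_{m-1}\otimes\Id\right)\Big(\Id+\sum_{i=1}^{m-1}P_d((im))\Big).
\end{align}
Partial transposition on site $m$ acts trivially on the factor $\Pi_{m-1}\otimes\Id$, which is supported on the first $m-1$ sites. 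It therefore distributes as $(A B)^{\Gamma_m}=A\,B^{\Gamma_m}$ whenever $A$ is the identity on site $m$; this is easy to check in a product basis. Using $P_d((im))^{\Gamma_{\{m\}}}=d\,\psi_{i,m}$, we get $\Pi_m^{\Gamma_m}=\frac1m A_m(\Id+d\Psi_m)$, with $A_m=\Pi_{m-1}\otimes\Id$ and $\Psi_m=\sum_i\psi_{i,m}$.

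Second, I will establish positivity and compute traces. Conjugation by $P_d(\pi')\otimes\Id$ permutes the terms $\psi_{i,m}$ among themselves, so $\Psi_m$ commutes with $A_m$. Both operators are positive, so their product is positive and its trace norm equals its trace. The traces are $\tr{A_m}=dD_{m-1}$ and
\begin{align}
\tr{A_m\psi_{i,m}}=\tr{\Pi_{m-1}\,\mathrm{tr}_m(\psi_{i,m})}=\tr{\Pi_{m-1}}/d=D_{m-1}/d,
\end{align}
since $\mathrm{tr}_m\psi_{i,m}=\Id/d$ on site $i$ tensored with the identity elsewhere. Summing over $i$ gives $\tr{A_m\Psi_m}=(m-1)D_{m-1}/d$.

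Third, I will assemble the pieces:
\begin{align}
\DC_m(\sigma^{(m)})^{\Gamma_m}=\Big(\frac{1}{mD_m}-\frac{1}{dD_{m-1}}\Big)A_m+\frac{d}{mD_m}A_m\Psi_m.
\end{align}
This requires the identity $mD_m=(d+m-1)D_{m-1}$. With it, the first coefficient equals $\frac{1}{(d+m-1)D_{m-1}}-\frac{1}{dD_{m-1}}$, which is negative. The first term therefore has trace norm
\begin{align}
dD_{m-1}\Big(\frac{1}{dD_{m-1}}-\frac{1}{(d+m-1)D_{m-1}}\Big)=\frac{m-1}{d+m-1}.
\end{align}
The second term has trace norm $\frac{d}{(d+m-1)D_{m-1}}\cdot\frac{(m-1)D_{m-1}}{d}=\frac{m-1}{d+m-1}$. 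The operator is thus written as a positive part minus a positive part, each of trace $\frac{m-1}{d+m-1}$, which is consistent with tracelessness. The triangle inequality gives $\|\cdot\|_1\le 2\frac{m-1}{d+m-1}$, and halving yields the claim.

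The main obstacle I anticipate is the bookkeeping in the first step. I need the left/right ordering of the coset decomposition and the partial transpose to be consistent. Here $\Gamma_m$ must pass through $\Pi_{m-1}\otimes\Id$ without reordering issues, which holds because that factor acts trivially on site $m$. I also need the transposed SWAP to be exactly $d\psi_{i,m}$, and not a transposed-conjugate variant, in the chosen basis. Everything else is the standard dimension identity for $D_m$.
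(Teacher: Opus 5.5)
Your proposal is correct and matches the paper's proof step for step: the coset decomposition giving $\Pi_m^{\Gamma_m}=\frac{1}{m}A_m(\Id+d\Psi_m)$, the commutation $[A_m,\Psi_m]=0$ that makes $A_m\Psi_m\succeq 0$, the traces $dD_{m-1}$ and $(m-1)D_{m-1}/d$, and a final triangle inequality. You also spell out details the paper leaves implicit, namely the identity $mD_m=(d+m-1)D_{m-1}$, the negative sign of the first coefficient, and the partial-trace computation $\mathrm{tr}_m\psi_{i,m}=\Id/d$.
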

\begin{proof}
    To begin, let us rewrite the symmetric subspace projector as \begin{align}
        \Pi_m & = \frac{1}{m!}\sum_{\pi \in S_m}P_d(\pi), \\
        & = \frac{1}{m!}\left(\sum_{\pi'\in S_{m-1}}P_d(\pi') \otimes \mathbb{I}_m\right)\left(\mathbb{I} + \sum_{i=1}^{m-1}P_d((im))\right), \\
        & = \frac{1}{m}\left(\Pi_{m-1} \otimes \mathbb{I}_m\right)\left(\mathbb{I} + \sum_{i=1}^{m-1}P_d((im))\right), \\
        \implies \Pi_m^{\Gamma_m}& =  \frac{1}{m}\left(\Pi_{m-1} \otimes \mathbb{I}_m\right)\left(\mathbb{I} + \sum_{i=1}^{m-1}d \psi_{i,m}\right),
    \end{align}
    where $\psi_{i,m}$ is the projector onto the maximally entangled state between system $i$ and $m$. To ease the remaining manipulation, define \begin{align}
        A_m \coloneq \Pi_{m-1} \otimes \mathbb{I}_m \quad \text{and} \quad \Psi_m\coloneq \sum_{i=1}^{m-1}\psi_{i,m},
    \end{align}
    which allows us to rewrite \begin{align}
        \Pi_m^{\Gamma_m} = \frac{1}{m}A_m(\mathbb{I}+d \Psi_m).
    \end{align}
    Next, observe that $[P_d(\pi'),\Psi_m] =0$. Indeed, because $P_d(\pi')\psi_{i,m}P_d(\pi')^\dagger = \psi_{\pi'(i),m}$, we may calculate \begin{align}
        P_d(\pi')\Psi_m P_d(\pi')^\dagger = \sum_{i=1}^{m-1}\psi_{\pi'(i),m} = \sum_{j=1}^{m-1}\psi_{j,m} = \Psi_m,
    \end{align}
    and it follows from linearity that $[A_m,\Psi_m] = 0$. Thus $0 \preceq A_m\Psi_m$, which in turn implies $\|A_m\Psi_m\|_1 = \tr{A_m\Psi_m}$. With this in mind, we can strategically take partial traces to obtain \begin{align}
        \|A_m\|_1 = d D_{m-1} \quad \text{and} \quad \|A_m\Psi_m\|_1 = (m-1)\cdot \frac{D_{m-1}}{d}.
    \end{align}
    With all of these facts, we may write the partially transposed single-erasure operator as \begin{align}
        \DC_m(\sigma^{(m)})^{\Gamma_m} = \frac{1}{D_m}\Pi_m^{\Gamma_m} - \frac{1}{dD_{m-1}}\Pi_{m-1}\otimes \mathbb{I} = \left(\frac{1}{mD_m}-\frac{1}{d D_{m-1}}\right)A_m + \frac{d}{m D_m}A_m\Psi_m
    \end{align}
    A final application of the triangle inequality yields \begin{align}
        \|\DC_m(\sigma^{(m)})^{\Gamma_m}\|_1 \leq \abs{\frac{1}{mD_m}-\frac{1}{d D_{m-1}}}dD_{m-1}+\frac{d}{mD_m}(m-1)\cdot \frac{D_{m-1}}{d} = 2\frac{m-1}{d+m-1}.
    \end{align}
    Dividing by two yields the desired result.
\end{proof}

In Ref.~\cite{gong2024sample}, the authors prove an $\varepsilon$-dependent lower bound on purity estimation using the tree formalism. We note that our proof of the above theorem can easily be adapted to recover this $\varepsilon$-dependent bound as well.

\begin{proposition}[Proposition~\ref{prop:eps-dep-purity-LB} Restated]
For $0<\eps\leq1-1/d$, any PPT-BOTH purity tester with distance parameter $\eps$ requires
$n=\Omega(\sqrt{d/\eps})$
samples.
\end{proposition}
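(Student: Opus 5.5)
The plan is to run the telescoping argument of Theorem~\ref{thm:eps-ind-PPT-purity-LB}, but only over the copies that have actually been depolarized. We compare two ensembles. The first is the Haar ensemble of pure states, with $n$-copy state $\sigma^{(n)}$. The second is the Haar orbit of $\rho_\delta=(1-\delta)\ketbra{0}{0}+\delta\tau_d$, with $\delta=\eps/(1-1/d)\le 2\eps$. By the distance computation quoted in the main text, every state in this orbit is $\eps$-far from the pure states. Writing $\omega_\delta^{(n)}\coloneq\mathbb{E}_{U}[(U\rho_\delta U^\dagger)^{\otimes n}]$, any PPT-BOTH $\eps$-tester must achieve $\abs{\tr{M(\sigma^{(n)}-\omega_\delta^{(n)})}}\ge 1/3$. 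It therefore suffices to show that this bias is $O(n^2\delta/d)$.

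\textbf{Step 1 (binomial expansion).} Expand $\rho_\delta^{\otimes n}=\sum_{T\subseteq[n]}(1-\delta)^{n-|T|}\delta^{|T|}\,\ketbra{0}{0}^{\otimes T^c}\otimes\tau_d^{\otimes T}$. The maximally mixed factors are unitarily invariant, so the Haar average of the $T$ term is $\sigma^{(n-|T|)}$ on $T^c$ tensored with $\tau_d^{\otimes T}$. Since the weights sum to one, we get
\begin{align}
\sigma^{(n)}-\omega_\delta^{(n)}=\sum_{T\subseteq[n]}(1-\delta)^{n-|T|}\delta^{|T|}\left(\sigma^{(n)}-\sigma^{(n-|T|)}_{T^c}\otimes\tau_d^{\otimes T}\right).
\end{align}

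\textbf{Step 2 (reduce each term to a contiguous block).} Fix $T$ with $|T|=k$ and choose a permutation $\pi$ sending $T$ to the last $k$ sites. The state $\sigma^{(n)}$ is permutation invariant, so conjugating by $P_d(\pi)$ turns the $T$ term into $\sigma^{(n)}-\sigma^{(n-k)}\otimes\tau_d^{\otimes k}$ and replaces $M$ with $M'=P_d(\pi)^\dagger MP_d(\pi)$. The effect $M'$ is still PPT-BOTH, because $(P_d(\pi)^\dagger MP_d(\pi))^{\Gamma_S}=P_d(\pi)^\dagger M^{\Gamma_{\pi(S)}}P_d(\pi)$. Since transposition on different sites of a product is relabeled by the permutation, this identity is easy to check on product operators.

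We then telescope only the last $k$ steps:
\begin{align}
\sigma^{(n)}-\sigma^{(n-k)}\otimes\tau_d^{\otimes k}=\sum_{m=n-k+1}^{n}\DC_m(\sigma^{(m)})\otimes\tau_d^{\otimes(n-m)}.
\end{align}
(For $k=n$ or $k=n-1$ the sum starts at $m=2$, since $\sigma^{(1)}=\tau_d$ and $\sigma^{(0)}$ is trivial.) Exactly as in the proof of Theorem~\ref{thm:eps-ind-PPT-purity-LB}, we transpose the $m$-th leg, use that $M'^{\Gamma_m}$ is an effect, and apply Lemma~\ref{lem:last-leg-depolarize-trace-norm-bound}. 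This bounds the $T$ term by
\begin{align}
\sum_{m=n-k+1}^n\frac{m-1}{d}\le\frac{kn}{d}.
\end{align}

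\textbf{Step 3 (average over $T$).} By the triangle inequality,
\begin{align}
\abs{\tr{M(\sigma^{(n)}-\omega_\delta^{(n)})}}\le\frac{n}{d}\,\mathbb{E}_{k\sim\mathrm{Bin}(n,\delta)}[k]=\frac{n^2\delta}{d}\le\frac{2n^2\eps}{d}.
\end{align}
Requiring this to be at least $1/3$ gives $n=\Omega(\sqrt{d/\eps})$.

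The step needing the most care is Step 2. The depolarized sites are scattered, so the argument needs both that $\sigma^{(n)}$ is permutation invariant and that the PPT-BOTH class is closed under site relabeling. It also needs the telescoping to run over only $k$ steps rather than all $n$, since that is the source of the $\delta$ factor. If the relabeling argument caused trouble, the fallback is to apply the telescoping directly in the original ordering, removing one site of $T$ at a time. Each such step is the Lemma's operator with a non-final leg transposed, and permutation symmetry shows it has the same trace norm.
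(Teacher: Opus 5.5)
Your proposal is correct and follows the paper's proof essentially step for step. It uses the same choice $\delta=\eps/(1-1/d)$ and the same binomial expansion over subsets of depolarized sites. It relabels those sites to the end (PPT-BOTH being closed under permutations), telescopes only over the last $k$ steps using Lemma~\ref{lem:last-leg-depolarize-trace-norm-bound}, and averages over $k\sim\mathrm{Bin}(n,\delta)$ to get an $O(n^2\eps/d)$ bias bound. Your cruder per-term bound $kn/d$, in place of the paper's $(kn-k(k+1)/2)/d$, costs nothing, since the paper also drops the quadratic term to reach $n^2\delta/d$.
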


To match the testing parameter exactly, we distinguish the mixing parameter $\delta$ from the promised distance $\eps$. For $\rho_\delta
    \coloneq
    (1-\delta)\ketbra{0}{0}+\delta\tau_d,$
the closest pure state is $\ketbra{0}{0}$, and hence
\begin{align}
    \min_{\psi\ {\rm pure}}
    d_{\rm tr}(\rho_\delta,\psi)
    =
    \delta\left(1-\frac{1}{d}\right).
\end{align}
We therefore set
\begin{align}
    \delta
    \coloneq
    \frac{\eps}{1-1/d}
    =
    \frac{d}{d-1}\eps.
\end{align}
Since $\eps\leq\delta\leq2\eps$ for $d\geq2$, this rescaling does not affect the asymptotic bound.

\begin{proof}
    An $\eps$-purity tester can distinguish the Haar-random pure-state ensemble from the Haar orbit of $\rho_\delta$. Their $n$-copy average states differ by
    \begin{align}
        \Delta_\delta
        &\coloneq
        \sigma^{(n)}
        -
        \underset{U\sim{\rm Haar}}{\mathbb{E}}
        \left[
            (U\rho_\delta U^\dagger)^{\otimes n}
        \right] \\
        &=
        \sum_{S\subseteq[n]}
        (1-\delta)^{n-s}\delta^s
        \left(
            \sigma^{(n)}
            -
            \sigma_{S^c}^{(n-s)}
            \otimes\tau_{d,S}
        \right),
        \qquad s\coloneq|S|.
    \end{align}
    For each $S$, we may relabel the registers so that the $s$ maximally mixed factors occupy the last $s$ positions. Such relabeling preserves the PPT-BOTH constraint, and the same telescoping argument gives
    \begin{align}
        \abs{\tr{M\Delta_\delta}}
        &\leq
        \frac{1}{2}
        \sum_{s=0}^{n}
        {n\choose s}
        (1-\delta)^{n-s}\delta^s
        \sum_{m=n-s+1}^{n}
        \left\|
            \DC_m(\sigma^{(m)})^{\Gamma_m}
        \right\|_1 \\
        &\leq
        \frac{1}{d}
        \sum_{s=0}^{n}
        {n\choose s}
        (1-\delta)^{n-s}\delta^s
        \left(
            ns-\frac{s(s+1)}{2}
        \right) \\
        &\leq
        \frac{n^2\delta}{d}
        =
        O\left(\frac{n^2\eps}{d}\right),
    \end{align}
    where we used Proposition~\ref{lem:last-leg-depolarize-trace-norm-bound} and
    \begin{align}
        \sum_{s=0}^{n}
        {n\choose s}
        (1-\delta)^{n-s}\delta^s s
        =
        n\delta.
    \end{align}
    Constant distinguishing bias therefore requires
    $n=\Omega(\sqrt{d/\eps})$.
\end{proof}

\section{Product Testing Details}
   This appendix follows closely along the lines of Ref.~\cite{beckey2025Product}. Let us assume that a PPT-BOTH product tester exists. That is, there exists $\{M,I-M\}$ with $0\leq M \leq I$ and $0 \leq M^{\Gamma_S} \leq I$ for all $S \subseteq [n]$ for which the following hold: 
    \begin{itemize}
        \item If $\ket{\psi}$ is product, then $\tr{M \psi^{\otimes n}} \geq \frac{2}{3} =: c $
        \item If $\ket{\psi}$ is $\epsilon$-far from product, the $\tr{M \psi^{\otimes n}} \leq \frac{1}{3} =: s$.
    \end{itemize}
   This implies that for any $\psi$ which is product and any $\phi$ which is far from product, we have 
    \begin{align}
        \frac{1}{3} = b \leq \tr{M\left(\psi^{\otimes n}-\phi^{\otimes n}\right)}.
    \end{align}
    Since this holds for all states in those respective sets, and because the trace is linear, we can take an expectation over an ensemble of states that are product and an ensemble of state which are all far from product. We have
    \begin{align}
        \frac{1}{3} = b \leq \abs{\tr{M \left(\underset{\ket{\psi_A},\ket{\psi_B} \in \mathbb{C}^d}{\mathbb{E}} [\psi_A^{\otimes n} \otimes \psi_B^{\otimes n}] - \underset{\ket{\psi} \in \mathbb{C}^{d^2}}{\mathbb{E}} [\psi^{\otimes n}]\right)}}.
    \end{align}
    Now, it is clear that the ensemble on the left contains only product states. However, the ensemble on the right clearly cannot only contain states that are far from product. Among all Haar random states, either $\ket{\psi}$ is close to product, or it is far, thus we have
    \begin{align}
        \underset{\ket{\psi} \in \mathbb{C}^{d^2}}{\mathbb{E}} \left[\tr{M \psi^{\otimes n}}\right] 
        &= \underset{\ket{\psi} \in \mathbb{C}^{d^2}}{\mathbb{E}} [\tr{M \psi^{\otimes n}} \rvert \psi~\mathrm{far}] \cdot \mathrm{Prob}\left[ \psi~\mathrm{far} \right] +  \underset{\ket{\psi} \in \mathbb{C}^{d^2}}{\mathbb{E}} [\tr{M \psi^{\otimes n}} \rvert \psi~\mathrm{close}] \cdot \mathrm{Prob}\left[ \psi~\mathrm{close} \right],\\
        &\leq \frac{1}{3} \cdot \mathrm{Prob}\left[ \psi~\mathrm{far} \right] +  1 \cdot \mathrm{Prob}\left[ \psi~\mathrm{close} \right],\\
        &= \frac{1}{3} \cdot (1-\mathrm{Prob}\left[ \psi~\mathrm{close} \right]) +  1 \cdot \mathrm{Prob}\left[ \psi~\mathrm{close} \right],
    \end{align}
    which implies 
    \begin{align}
        \underset{\ket{\psi} \in \mathbb{C}^{d^2}}{\mathbb{E}} \left[\tr{M \psi^{\otimes n}}\right]&\leq \frac{1}{3} + \frac{2}{3} \cdot \mathrm{Prob}\left[ \psi~\mathrm{close} \right]
    \end{align}
    where we have bounded the first expectation by 1/3 from the assumption of the tester and the other trivially because probabilities are always at most 1. In this form, it is clear that we need to prove that the probability of obtaining a state that is close to a product state when sampling uniformly over $\mathbb{C}^{d^2}$ is highly unlikely. 
    
    \begin{align}
        \mathrm{Prob}\left[ \psi~\mathrm{close} \right] &:= \mathrm{Prob}\left[ \exists \ \text{product } \phi \ \text{s.t. } d_{\rm tr}\big(|\psi\rangle,|\phi\rangle\big) \le \varepsilon \right],\\
        &= \mathrm{Prob}\left[ \underset{\ket{\phi} \mathrm{product}}{\max} |\braket{\phi}{\psi}|^2 \geq 1-\varepsilon^2 \right], \\
        &=\mathrm{Prob}[\lambda_{\rm max} (\psi_A)  \geq 1-\varepsilon^2],\\
        &\leq Ce^{-cd},
    \end{align}
    where we have defined $\lambda_{\rm max} (\psi_A) := \|\gamma_A \|_{\infty}$ to be the maximal Schmidt coefficient and used concentration results from Ref.~\cite{hayden2006Aspects}.We can then utilize the following Corollary from Ref.~\cite{harrow2013Church}. Thus, using these ensembles only alters the bias by some $o(1)$ factor. Since the tester accepts product states with probability at least $2/3$,
\begin{align}
\tr{M(\rho_{\mathrm{prod}}^{(n)}-\rho_{\mathrm{Haar}}^{(n)})}
\geq
\frac13-\frac23
\Pr_{\psi\sim\mathrm{Haar}}
\!\left[
d_{\rm tr}(\psi,\mathrm{Prod})<\eps
\right]
=
\frac13-o(1).
\end{align} We are now ready to prove Theorem~\ref{thm:product-testing-LB} from the main text.

\begin{proof}[Proof of Theorem~\ref{thm:product-testing-LB}]
    As established in the preceding section, any PPT-BOTH measurement $\{M, \mathbb{I}-M\}$ functioning as a valid $\varepsilon$-product tester must have $\Omega(1)$ bias in the task of distinguishing 
    \begin{align}
        \Delta \coloneq \underset{\ket{\psi_A}, \ket{\psi_B} \in \mathbb{C}^{d}}{\mathbb{E}}\left[\psi_A^{\otimes n} \tilde{\otimes} \psi_B^{\otimes n}\right] - \underset{\ket{\psi}\in \mathbb{C}^{d^2}}{\mathbb{E}}\left[\psi^{\otimes n}\right],
    \end{align}
    where we denote $\rho_{\mathrm{prod}}^{(n)}$ and $\rho_{\mathrm{far}}^{(n)}$, respectively. If we can show such a measurement also has $O(n^2/d)$ bias, then it would follow that the sample complexity $n = \Omega(\sqrt{d})$. First, we can rewrite the bias \begin{align}
        \abs{\tr{M\Delta}} = \abs{\tr{M(\rho_{\mathrm{prod}}^{(n)} - \tau_{d^2}^{\otimes n} - (\rho_{\mathrm{far}}^{(n)} - \tau_{d^2}^{\otimes n}))}} \leq \abs{\tr{M(\rho_{\mathrm{prod}}^{(n)}  - \tau_{d^2}^{\otimes n})}} + \abs{\tr{M(\rho_{\mathrm{far}}^{(n)} - \tau_{d^2}^{\otimes n})}}.
    \end{align}
        From Theorem~\ref{thm:eps-ind-PPT-purity-LB}, by replacing $d \to d^2$, we see the second term is $O(n^2/d^2)$. Thus it suffices to show the first term is $O(n^2/d)$. In this setting, we define \begin{align}
       \rho_{\mathrm{prod}}^{(m)} \coloneq \sigma_A^{(m)} \tilde{\otimes} \sigma_B^{(m)},
       \end{align}
        where $\sigma_A^{(m)}$ and $\sigma_B^{(m)}$ are the maximally mixed states on the symmetric subspaces for Alice and Bob, respectively. Note that $\rho_{\mathrm{prod}}^{(1)} = \sigma_A^{(1)} \tilde{\otimes} \sigma_B^{(1)} = \tau_{d^2}$. Using a telescoping sum, we can then rewrite \begin{align}
        \rho_{\mathrm{prod}}^{(n)} - \tau_{d^2}^{\otimes n} & = \sum_{m=2}^n \left[\rho_{\mathrm{prod}}^{(m)} \otimes \tau_{d^2}^{\otimes(n-m)} - \rho_{\mathrm{prod}}^{(m-1)} \otimes \tau_{d^2}^{\otimes(n-m+1)}\right], \\
        & = \sum_{m=2}^n \left[(\sigma_A^{(m)}\tilde{\otimes} \sigma_B^{(m)}) - (\sigma_A^{(m-1)}\otimes \tau_d)\tilde{\otimes} (\sigma_B^{(m-1)}\otimes \tau_d) \otimes \tau_{d^2}^{\otimes (n-m)}\right], \\
        & = \sum_{m=2}^n \left[(\sigma_A^{(m)} - \sigma_A^{(m-1)}\otimes \tau_d)\tilde{\otimes}\sigma_B^{(m)} + (\sigma_A^{(m-1)}\otimes \tau_d)\tilde{\otimes} (\sigma_B^{(m)} - \sigma_B^{(m-1)}\otimes \tau_d) \right] \otimes \tau_{d^2}^{\otimes (n-m)},   \\
        & = \sum_{m=2}^n\left[\DC_m(\sigma_A^{(m)}) \tilde{\otimes} \sigma_B^{(m)} + (\sigma_A^{(m-1)}\otimes \tau_d) \tilde{\otimes} \DC_m(\sigma_B^{(m)})\right] \otimes \tau_{d^2}^{\otimes (n-m)}.
    \end{align}
    With Lemma~\ref{lem:last-leg-depolarize-trace-norm-bound}, the bias can then be upper bounded as \begin{align}
        \abs{\tr{M(\rho_{\mathrm{prod}}^{(n)}-\tau_{d^2}^{\otimes n})}} & = \abs{\tr{M\left(\sum_{m=2}^n\left[\DC_m(\sigma_A^{(m)}) \tilde{\otimes} \sigma_B^{(m)} + (\sigma_A^{(m-1)}\otimes \tau_d) \tilde{\otimes} \DC_m(\sigma_B^{(m)})\right] \otimes \tau_{d^2}^{\otimes (n-m)}\right)}},   \\
        & \leq \sum_{m=2}^n \abs{\tr{M\left(\left[\DC_m(\sigma_A^{(m)}) \tilde{\otimes} \sigma_B^{(m)} + (\sigma_A^{(m-1)}\otimes \tau_d) \tilde{\otimes} \DC_m(\sigma_B^{(m)})\right] \otimes \tau_{d^2}^{\otimes (n-m)}\right)}},   \\
        & = \sum_{m=2}^n \abs{\tr{M^{\Gamma_m}\left(\left[\DC_m(\sigma_A^{(m)}) \tilde{\otimes} \sigma_B^{(m)} + (\sigma_A^{(m-1)}\otimes \tau_d) \tilde{\otimes} \DC_m(\sigma_B^{(m)})\right] \otimes \tau_{d^2}^{\otimes (n-m)}\right)^{\Gamma_m}}}, \\
        & \leq \frac{1}{2}\sum_{m=2}^n \left\|\left[\DC_m(\sigma_A^{(m)}) \tilde{\otimes} \sigma_B^{(m)} + (\sigma_A^{(m-1)}\otimes \tau_d) \tilde{\otimes} \DC_m(\sigma_B^{(m)})\right]^{\Gamma_m}\right\|_1 , \\   
        & \leq \frac{1}{2}\sum_{m=2}^n \left\|\DC_m(\sigma_A^{(m)})^{\Gamma_m}\right\|_1\cdot \left\|(\sigma_B^{(m)})^{\Gamma_m}\right\|_1 + \left\|(\sigma_A^{(m-1)}\otimes \tau_d)\right\|_1 \cdot \left\|\DC_m(\sigma_B^{(m)})^{\Gamma_m}\right\|_1, \\
        & \leq \frac{1}{2}\sum_{m=2}^n \frac{m-1}{d}\cdot \left\|(\sigma_B^{(m)})^{\Gamma_m}\right\|_1 + \left\|(\sigma_A^{(m-1)}\otimes \tau_d)\right\|_1 \cdot \frac{m-1}{d}. 
    \end{align}
    It is clear that $\left\|(\sigma_A^{(m-1)}\otimes \tau_d)\right\|_1 = 1$ from it being a valid quantum state. Note that also $\left\|(\sigma_B^{(m)})^{\Gamma_m}\right\|_1 = 1$. To see this, recall from the proof of Lemma~\ref{lem:last-leg-depolarize-trace-norm-bound} that we may write $\Pi_m^{\Gamma_m} = \frac{1}{m}A_m(\mathbb{I} + d\Psi_m)$. Furthermore, $[A_m, \Psi_m] = 0$. Consequently, the entire operator $A_m(\mathbb{I} + d\Psi_m) = A_m + d A_m\Psi_m \succeq 0$. Thus $(\sigma_B^{(m)})^{\Gamma_m} = \frac{1}{D_m}\Pi_m^{\Gamma_m} \succeq 0$. Because trace is invariant under partial transposition, $\left\|(\sigma_B^{(m)})^{\Gamma_m}\right\|_1 = \tr{\sigma_B^{(m)}} = 1$.
    This allows us to complete the bound of the bias, as we can now write \begin{align}
        \abs{\tr{M(\rho_{\mathrm{prod}}^{(n)}-\tau_{d^2}^{\otimes n})}} & \leq 2 \sum_{m=2}^n \frac{m-1}{d} = O\left(\frac{n^2}{d}\right).
    \end{align}
\end{proof}

\end{document}